\def \0{\mathbf{0}}
\def \1{\mathbf{1}}
\def \I{\mathbf{I}}
\def \K{\mathbf{K}}
\def \bK{\bar{\mathbf{K}}}
\def \hK{\hat{\mathbf{K}}}
\def \hG{\hat{G}}
\def \bG{\bar{G}}
\def \hF{\hat{F}}
\def \bP{\mathbf{P}}
\def \cP{\mathcal{P}}
\def \cA{\mathcal{A}}
\def \cU{\mathcal{U}}
\def \bR{\mathbb{R}}
\def \E{\textrm{E}}
\def \Var{\textrm{Var}}
\def \LQ{\textrm{LQ}}
\def \MV{\textrm{MV}}
\def \cP{\mathcal{P}}
\def \cU{\mathcal{U}}
\def \cA{\mathcal{A}}
\def \F{\mathcal{F}}
\def \L{\mathcal{L}}
\def \K{\mathcal{K}}
\def \bR{\mathbb{R}}
\def \bP{\mathbb{P}}
\def \E{\textrm{E}}
\def \Var{\textrm{Var}}
\def \1{\mathbf{1}}
\def \bK{\bar{K}}
\def \hK{\hat{K}}
\def \hG{\hat{G}}
\def \bG{\bar{G}}
\def \L{\mathcal{L}}
\def \LQ{\textrm{LQ}}
\def \MV{\textrm{MV}}
\def \tK{\tilde{K}}
\def \0{\mathbf{0}} 
\def \I{\mathbf{I}} 
\newtheorem{lemma}{Lemma}
\newtheorem{theorem}{Theorem}
\newtheorem{assumption}{Assumption}
\newtheorem{example}{Example}
\begin{document}

\begin{frontmatter}

\title{Optimal Control of Constrained Stochastic Linear-Quadratic Model with Applications\thanksref{footnoteinfo}} 

\thanks[footnoteinfo]{This research work was partially supported by National Natural Science Foundation of China under the grant 61573244.}

\author[SJTU]{Weiping Wu}\ead{godream@sjtu.edu.cn},                        
\author[SHUFE]{Jianjun Gao\corauthref{cor1}}                              
\corauth[cor1]{Corresponding Author.}
\ead{gao.jianjun@shufe.edu.cn},
\author[SJTU]{Junguo Lu}\ead{jglu@sjtu.edu.cn},                   
\author[POLYU]{Xun Li}\ead{li.xun@polyu.edu.hk}

\address[SJTU]{Department of Automation, Shanghai Jiao Tong University, Shanghai, China}  
\address[SHUFE]{School of Information Management and Engineering, Shanghai University of Finance and Economics, Shanghai, China}             
\address[POLYU]{Department of Applied Mathematics, The Hong Kong Polytechnic University, Hong Kong, China}
\begin{keyword}                           
Linear quadratic regulator, Optimal control, Stochastic control, Financial systems              
\end{keyword}                             

\begin{abstract}
This paper studies a class of continuous-time scalar-state stochastic Linear-Quadratic (LQ) optimal control problem with the linear control constraints. Applying the state separation theorem induced from its special structure, we develop the explicit solution for this class of problem. The revealed optimal control policy is a piece-wise affine function of system state. This control policy can be computed efficiently by solving two Riccati equations off-line. Under some mild conditions, the stationary optimal control policy can be also derived for this class of problem with infinite horizon. This result can be used to solve the constrained dynamic mean-variance portfolio selection problem. Examples shed light on the solution procedure of implementing our method.

\end{abstract}

\end{frontmatter}

\section{Introduction}
In this paper, we study the constrained Linear-Quadratic(LQ) control problem for the continuous-time stochastic scalar-state system. Without a doubt, the LQ type of control model (see, e.g., \cite{Kalman:1960}\cite{Wonham:1969}\cite{Bismut:1976}) plays an important role in both theoretical study and various applications(\cite{AndersonLQ:2014}\cite{Pham:2009}\cite{ZhouLi:2000}\cite{LiZhouLim:2002}), largely due to its elegant structure. From the solution point of view, the continuous-time LQ optimal control is achieved by solving of an unconstrained convex quadratic optimization problem,\footnote{For the classical discrete-time LQ control model, we need to solve a sequence of unconstrained convex quadratic programming problems.} which admits the closed form solution. The linear feedback control policy and the correspondent Riccati Equation arise naturally from this structure. However, in real applications, the control constraints are inevitable, e.g., the system usually subjects to the physical limits of the executor or the economic regulation restrictions. Generally speaking, involving the control constraints usually destroy the elegant structure of the LQ control, i.e., the closed form of control policy is not available any more.

In the literature, there are many efforts on investigating the control policy for the constrained stochastic LQ control problem. However, even for the deterministic system with a simple nonnegative control constraints(e.g., the control policy is nonnegative), there is no closed form solution(see, e.g., \cite{Campbell:1982}\cite{HeemelsEijndhovenStoorvogel:1998}). As for the discrete-time constrained LQ control model, the analytical control police can be developed only for some special cases. For example, Gao $et~al.$\cite{GaoLi:2011} study the deterministic LQ control model with cardinality constraints on the total control periods. They apply the dynamic programming to develop a semi-analytical control policy. Bemporad $et~al.$\cite{Bemporad:2002} investigate the deterministic LQ model with a general linear inequality constraints on both states and controls. By utilizing the proposed parametric programming method, they derive an explicit solution. However, this approach may suffer a heavy computational burden even when the size of the problem is not large. Recently, Wu et~al. \cite{WuGaoLiShi:2017} adopt the state separation theorem and provide the explicit control policy for a class of constrained discrete-time stochastic LQ optimal control problem for the system with multiplicative noise. Except these special cases, the method so called Model Predictive Control(MPC)(e.g., see \cite{BernardiniBemporad:2012}\cite{Patrinos:2014}) is well established to characterize the approximated control policy. For the recent development of the MPC method on solving the constrained stochastic LQ control model, the reader may refer \cite{Primb:2009} \cite{BernardiniBemporad:2012} \cite{Mesbah:2016} for a comprehensive survey.

To our best knowledge, in current literature, there are only few works focusing on investigating the analytical control policy for the continuous-time constrained stochastic LQ control problem. Although the optimality conditions can be established for this class of problem by using the general optimal control theory(see, e.g., \cite{YongZhou:1999}\cite{Pham:2009}), the analytical control policy is only available for some special structured problems. The dynamic Mean-Variance(MV) portfolio selection problem is one of these cases. The seminar work \cite{ZhouLi:2000} first extends the classical MV portfolio selection model to the continuous-time setting and develop the optimal portfolio policy by using the stochastic control approach. Following this line, Li et al.\cite{LiZhouLim:2002} solve the portfolio optimization problem with the no-shorting constraint(or equivalently, the nonnegative control constraint). They reformulate such a problem into a constrained LQ control problem and develop the semi-analytical solution by constructing the viscosity solution for the associated Hamilton-Jacobi-Bellman (HJB) equation. Based on dynamic programming, Cui et al. \cite{CuiGaoLiLi:2014} \cite{CuiLiLi:2017} developed solution of this type of problem under the discrete-time model with no shorting constraints and cone constraints, respectively. As for the constrained stochastic LQ control problem, Hu et al. \cite{HuZhou:2005} adopt the backward stochastic differential equation (BSDE) approach to characterize the analytical solution for the problem with cone constraint on control. Recently, Hu et al.\cite{HuHuangNie:2017} characterize the analytical solution of the LQ mean-field game with input constraints.

Our work contributes to current literature on developing the control policy of the constrained stochastic LQ control problem. The contribution of this work includes several aspects. Firstly, for the problem with finite time horizon, we successfully derive the analytical solution for a class of stochastic LQ control problems with a general linear constraints. This type of constraint goes beyond the cone constraints considered in \cite{CuiLiLi:2017}\cite{HuZhou:2005} and the nonnegative constraints considered in \cite{CuiGaoLiLi:2014}\cite{LiZhouLim:2002}. We show that the optimal control policy is a piece-wise affine function of state variable, which can computed efficiently by solving two Riccati equations off-line. Secondly, for the problem with infinite time horizon, we show that, under some mild condions, the stationary optimal control policy can be obtained. When the stationary control exists, we also provide an algorithm to identify the solutions the associated algebraic Riccati equations. Last but not least, we apply such a optimal control model to solve the constrained dynamic MV portfolio selection problem. Based on the real market data, we also provide the example to illustrate how to use our solution method in real application.


This paper is organized as follows. Section \ref{se_formulation} provides the basic formulation the model of continuous-time constrained stochastic LQ control problem for both finite and infinite time horizon. The analytical solutions of these two problems are presented in Section \ref{se_sol_P} and Section \ref{se_sol_Pinf}, respectively. Section \ref{se_application_MV} devotes to deriving the optimal investment policy and efficient frontier for the constrained dynamic MV portfolio selection problem. Section \ref{se_example} provides some examples to show that how to use our solution scheme in real applications. Finally, Section \ref{se_conclusion} concludes the whole paper and provides some further extensions.

\textit{Notation}. In this paper, we use the following basic notations: $\bR^n$ denotes the set of $n$-dimensional real column vectors, $\bR^{n \times m}$ denotes the set of $n \times m$ real matrices, $R\succeq 0$$(R \succ 0)$ denotes a positive semidefinite (positive definite) matrix; $\0$ and $\I$ denote zero matrix(vector) and identify matrix in a proper dimension, respectively; $\1_{\mathcal{A}}$ denotes the indicator function such that $\1_{\mathcal{A}}=1$ if the condition $\mathcal{A}$ holds true and $\1_{\mathcal{A}}=0$ otherwise. We use $v(\cP)$ to denote the optimal objective value for any problem $(\cP)$.

\section{Problem Formulation} \label{se_formulation}

\subsection{Problem Formulation with Finite Time Horizon}\label{sse_finite}
In this work, we consider the following scalar-state linear stochastic differential
equation (SDE), for $t\in[0,T]$,
\begin{align}
\begin{cases}
dx(t)=\big( A(t)x(t)+B(t)^{\prime}u(t) \big)dt\\
~~~~~~~~~~+\big( x(t)C(t)^{\prime}+u(t)^{\prime}D(t) \big) dW(t),\\
x(0)=x_0,
\end{cases}\label{def_xt}
\end{align}
where $T$ is a finite number, $A(t) \in \bR$, $B(t) \in \bR^n$, $C(t) \in \bR^m$, $D(t) \in \bR^{n \times m}$ are deterministic and bounded system matrices, $u(t) \in \bR^n$ is the control vector, $x(t)\in \bR$ is the system state and $W(\cdot)$ is the $m$-dimensional standard Brownian motion. All the randomness are modeled by a complete filtrated probability space $(\Omega,\F, \{\F_t\}_{t\geq0},\bP)$, where $\F_t$ is the augmented $\sigma$-algebra generated by $W(t)$, which represents the information set available at time $t$, $t\in [0,T]$. We define $\L_{\F}^{2}(0,T;\bR^k)$ as the set of $\F_t$ adapted square integrable stochastic process, i.e., set of $\{\F_{t}\}_{t\geq0}$-adapted stochastic process $f(t)$ satisfying $\E[\int_{0}^{T}\|f(t)\|^2dt]$$<$$+\infty$. In the following part, we use $\E_t[\cdot]$ to denote the conditional expectation $\E[\cdot |\F_t]$.

We assume admissible controls $u(\cdot)$ are square integrable, i.e., $u(\cdot)$$\in$$L_{\F}^2(0,T;\bR^n)$. In addition, motivated from the real applications, the admissible controls satisfy the following control constraint,
\begin{align}
\cU_t(x(t))=&\Big\{~u(t)\in \L_{\F}^{2}(0,T;\bR^n)\big|~H(t)u(t)\leq d(t)|x(t)|~\Big\} \label{def_Ut}
\end{align}
for $t \in [0,T]$, where $H(t) \in \bR^{k\times n}$ and $d(t)\in \bR^k$ are given deterministic matrices and vectors, respectively. \footnote{Due to states $x(t)$ and controls $u(t)$ are random variables for all $t\in(0,T]$, we assume that the inequalities given in (\ref{def_Ut}) should be held \textit{almost surely}, i.e., the inequalities are invalid for the cases with zero probability measure. In order to simplify the notation, we omit the term `almost surely' in this paper.} To ensure the existence of the feasible control policies, we need the following assumption.
\begin{assumption}\label{asmp_feasibility_finite}
The following set,
\begin{align*}
\Big\{K \in \mathbb{R}^n~|~H(t) K \leq d(t), H(t) K\leq \0 \Big\},
\end{align*}
is nonempty for all $t\in [0,T]$.
\end{assumption}
It is not hard to see that, under the Assumption \ref{asmp_feasibility_finite}, the feasible set $\cU_t(x(t))$ is always nonempty. In this model, the constraint set ($\ref{def_Ut}$) has the freedom to model several types of control constraints. If we set $d(t)=\0$ and $H(t)=-\I$, the constraint (\ref{def_Ut}) becomes the nonnegative constraint, i.e., $u(t)\geq \0$. If we set $H(t)$$=$$\left(
            \begin{array}{c}
              \I \\
              -\I \\
            \end{array}
          \right)$
and $d(t)$$=$$\left(
    \begin{array}{c}
        d_1(t) \\
        -d_2(t) \\
        \end{array}
        \right)$
with $d_1(t)$$>$$d_2(t)$ for all $t\in[0,T]$ in constraint (\ref{def_Ut}), it is equivalent to the lower and upper bound constraints,
\begin{align*}
d_2(t)|x(t)| \leq u(t) \leq d_1(t)|x(t)|,
\end{align*}
for all $t\in[0,T]$. Furthermore, if we set $d(t)=\0$, the constraint (\ref{def_Ut}) becomes the cone constraint.


In order to formulate the performance index, we introduce the following deterministic matrices and vectors, $R(t)$ $\in$ $\bR^{n\times n}$ with $R(t)$ $\succeq$ $0$, $q(t)$ $\in$ $\bR$ with $q(t)$$\geq$$0$, and $S(t)$ $\in$ $\bR^n$ for $t\in[0,T)$, which represent the penalties on controls, system states and the cross term, respectively. Let $q_T\in \bR$ with $q_T\geq 0$ be the penalty of the terminal state. These parameters can be written in a more compact formulation,
$Q(t)$ $:=$ $\left(\begin{array}{cc}
R(t)          & S(t)\\
S(t)^{\prime} & q(t)
\end{array}\right)$ for $t\in[0,T)$. We propose the inequality constrained stochastic LQ control problem (ICLQ) in continuous time as follows,
\begin{align*}
(\cP^T_{\LQ})&~\min_{\{u(t)\}|_{t=0}^{T}}~\E \left[ \int_{0}^{T}
\left(\begin{array}{cc}
u(t)\\
x(t)
\end{array}\right)^{\prime}Q(t)
\left(\begin{array}{cc}
u(t)\\
x(t)
\end{array}\right)dt~\right] +\E[q_Tx(T)^2]\\
{\rm s.t.}~&\{x(t),u(t)\}~\textrm{satisfies (\ref{def_xt}) and (\ref{def_Ut})}.\notag
\end{align*}

To solve the above problem, we need the following assumption.
\begin{assumption} \label{asmp_psd}
Assume that $Q(t)\succeq 0$ and $D(t)D(t)^{\prime}\succ 0$ for $t\in[0,T]$.
\end{assumption}
Note that the Assumption \ref{asmp_psd} requires matrix $D(t)D(t)^{\prime}$ to be nonsingular. This type of assumption is also imposed in the continuous-time MV portfolio selection problem which requires the volatility coefficient $\sigma(t)$ to be nondegenerate(please refer \cite{LiZhouLim:2002}\cite{ZhouLi:2000} \cite{HuZhou:2005}). Furthermore, the Assumption \ref{asmp_psd} also guarantees the convexity of problem ($\cP_{\LQ}^T$).

\subsection{Problem Formulation with Infinite Time Horizon}\label{sse_infinite}
We are also interested in the stationary control for the problem $(\cP_{\LQ}^T)$ when the horizon, $T$ goes to infinite. In this case, we assume all system parameters, $A(t)$, $B(t)$, $C(t)$ and $D(t)$ for $t\geq 0$ to be time invariant, i.e., $A(t)=A$, $B(t)=B$, $C(t)=C$ and $D(t)=D$ for $t \geq 0$, where $A$, $B$, $C$ and $D$ are constant matrices(vectors) in a proper dimension. The system dynamic (\ref{def_xt}) becomes
\begin{align}
\begin{cases}
dx(t)=\big( Ax(t)+B^{\prime}u(t) \big)dt+\big( x(t)C^{\prime}+u(t)^{\prime}D \big) dW(t),\\
x(0)=x_0
\end{cases} \label{def_xt_inf}
\end{align}
for $t\geq 0$. As for the constraint (\ref{def_Ut}), we also assume all the parameters to be constant, i.e.,
\begin{align}
\tilde{\cU}_t(x(t))&=\Big\{u(t)\in \L_{\F}^{2}(0,\infty;\bR^n)~|~H u(t)\leq d|x(t)|~\Big\}, \label{def_Ut_inf}
\end{align}
for $t\geq 0$, where $H\in \bR^{k \times n}$ and $d \in \bR^{n}$ are constant matrix and vector, respectively. Similar to the Assumption \ref{asmp_feasibility_finite}, we propose the following assumption to ensure the feasibility of the constraint set,
\begin{assumption}\label{asmp_feasibility_infinite}
The following set,
\begin{align*}
\Big\{K \in \mathbb{R}^n~|~H K \leq d, ~H K\leq \0 \Big\},
\end{align*}
is nonempty.
\end{assumption}
As for the cost function, we assume that all matrices to time invariant, i.e., $Q(t)$ $=$ $Q$ $=$
$\left(\begin{array}{cc}
R          & S\\
S^{\prime} & q
\end{array}\right)$  for any $t\in[0,\infty)$, where $R\in\bR^{n\times n}$, $S\in\bR^{n}$ and $q\in\bR $ are constant matrices, vectors and scalar value, respectively. We then consider the following problem with infinite time horizon,
\begin{align*}
(\cP_{\LQ}^{\infty})~&~\min_{\{u(t)\}|_{t=0}^{\infty}}~\E \left[ \int_{0}^{\infty}
\left(\begin{array}{cc}
u(t)\\
x(t)
\end{array}\right)^{\prime}Q
\left(\begin{array}{cc}
u(t)\\
x(t)
\end{array}\right)dt~\right] \\
{\rm s.t.}~&~\{x(t),u(t)\}~\textrm{satisfies (\ref{def_xt_inf}) and (\ref{def_Ut_inf})}.\notag
\end{align*}
For problem $(\cP_{\LQ}^{\infty})$, we need a stronger assumption by requiring $Q$ to be positive definite as,
\begin{assumption} \label{asmp_C}
Assume that $Q\succ 0$ and $DD^{\prime}\succ 0$.
\end{assumption}

\section{Solution Procedure for Problem (\texorpdfstring{$\cP_{\LQ}^T$}{Lg})}\label{se_sol_P}
In this section, we firstly provide an important theorem, which characterizes the state separation property for our formulation. Then, we apply this result to solve problem ($\cP_{\LQ}^T$).

\subsection{State Separation Property}\label{sse_state_separation}
To develop the analytical solution of problem ($\cP_{\LQ}^T$), we need the following state separation theorem, which can be regarded as a generalization of the ones developed in \cite{LiZhouLim:2002}. We  introduce the following set associated with the control constraint $\cU_t(x(t))$ as,
\begin{align*}
\K(t)=\big\{K \in \bR^n ~|~ H(t)K \leq d(t)\big\},
\end{align*}
for $t\in [0,T]$.
\begin{theorem}\label{thm_sep}
At any time $t\in [0,T]$, given a matrix $\Omega \in \bR^{n\times n}$ with $\Omega \succeq 0$ and a vector $\omega \in \mathbb{R}^n$, suppose the solutions and optimal objective values of the following problems,
\begin{align*}
&(\hat{\cP})~\min_{K\in \K(t)}~K^{\prime}\Omega K + 2\omega^{\prime}K ~~\textrm{and}~~\\
&(\bar{\cP})~\min_{K\in \K(t)}~K^{\prime}\Omega K - 2\omega^{\prime}K,
\end{align*}
are $\hat{K}$ with $v(\hat{\cP})$ and $\bar{K}$ with $v(\bar{\cP})$, respectively. Then the solution of the following problem for some $\alpha \in \mathbb{R}$,
\begin{align*}
(\cP(\alpha))~~\min_{u\in \cU_t(\alpha)}~u^{\prime}\Omega u + 2\alpha \omega^{\prime}u
\end{align*}
is
\begin{align*}
u^*(\alpha)= \big( \alpha \1_{\{ \alpha\geq 0 \}}\big)\hat{K}+ \big(|\alpha| \1_{\{\alpha<0\} }\big)\bar{K}
\end{align*}
with the optimal objective value
\begin{align}
v(\cP(\alpha))=\alpha^2 \big(v(\hat{\cP}) \1_{\{\alpha\geq 0\}}+v(\bar{\cP}) \1_{\{\alpha <0\}} \big). \label{thm_sep_obj}
\end{align}
\end{theorem}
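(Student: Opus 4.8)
The plan is to reduce $(\cP(\alpha))$ to one of the two reference problems $(\hat{\cP})$, $(\bar{\cP})$ by an elementary change of variables, handling the sign of $\alpha$ separately. First I would note that since $\alpha$, $H(t)$, $d(t)$ are deterministic and the objective of $(\cP(\alpha))$ depends only on the value of $u$, the minimization is effectively the finite-dimensional problem over the polyhedron $\cU_t(\alpha)=\{u\in\bR^n\mid H(t)u\leq d(t)|\alpha|\}$, which is nonempty by Assumption~\ref{asmp_feasibility_finite} (it contains a scaling of the $K$ furnished there).

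For $\alpha>0$ I would substitute $u=\alpha K$. Since $|\alpha|=\alpha>0$, the constraint $H(t)(\alpha K)\leq\alpha d(t)$ is equivalent to $H(t)K\leq d(t)$, so $K\mapsto\alpha K$ is a bijection from $\K(t)$ onto $\cU_t(\alpha)$; substituting into the objective gives $\alpha^2\big(K^{\prime}\Omega K+2\omega^{\prime}K\big)$, i.e.\ $\alpha^2$ times the objective of $(\hat{\cP})$. Hence $u^*(\alpha)=\alpha\hat{K}$ and $v(\cP(\alpha))=\alpha^2 v(\hat{\cP})$. For $\alpha<0$ I would instead substitute $u=|\alpha|K=-\alpha K$; again $H(t)(|\alpha|K)\leq|\alpha|d(t)$ iff $K\in\K(t)$, and the objective becomes $|\alpha|^2 K^{\prime}\Omega K+2\alpha|\alpha|\,\omega^{\prime}K$. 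The only delicate point is the cross term: since $\alpha<0$ we have $\alpha|\alpha|=-\alpha^2=-|\alpha|^2$, so the objective equals $|\alpha|^2\big(K^{\prime}\Omega K-2\omega^{\prime}K\big)$, which is $\alpha^2$ times the objective of $(\bar{\cP})$; therefore $u^*(\alpha)=|\alpha|\bar{K}$ and $v(\cP(\alpha))=\alpha^2 v(\bar{\cP})$. For $\alpha=0$ the problem reads $\min\{u^{\prime}\Omega u\mid H(t)u\leq\0\}$, and since $\Omega\succeq0$ and $\0$ is feasible, $u^*(0)=\0$ attains the minimum value $0$.

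To finish, I would observe that the three cases are exactly what the indicator functions encode: $\big(\alpha\1_{\{\alpha\geq0\}}\big)\hat{K}+\big(|\alpha|\1_{\{\alpha<0\}}\big)\bar{K}$ equals $\alpha\hat{K}$, $\0$, $|\alpha|\bar{K}$ according as $\alpha>0$, $\alpha=0$, $\alpha<0$, and $\alpha^2\big(v(\hat{\cP})\1_{\{\alpha\geq0\}}+v(\bar{\cP})\1_{\{\alpha<0\}}\big)$ likewise reproduces the three optimal values, which is (\ref{thm_sep_obj}). I do not expect a genuine obstacle: the argument is nothing more than the positive homogeneity of the feasible set in $|\alpha|$ together with the quadratic/bilinear scaling of the objective. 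The single place that needs care is the sign bookkeeping when $\alpha<0$, which is precisely what makes the auxiliary problem $(\bar{\cP})$ --- rather than $(\hat{\cP})$ --- the relevant one in that regime.
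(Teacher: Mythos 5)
Your proof is correct, but it takes a genuinely different route from the paper. The paper argues via optimality conditions: it writes down the KKT system for $(\cP(\alpha))$ and for $(\hat{\cP})$, and then checks that scaling a KKT pair of $(\hat{\cP})$ by $\alpha$ (i.e.\ $u^*=\alpha\hat{K}$, $\beta^*=\alpha\rho^*$) produces a KKT pair for $(\cP(\alpha))$, with the case $\alpha<0$ "treated similarly" and omitted. Your argument instead uses the positive homogeneity of the feasible set directly: the substitution $u=|\alpha|K$ is a bijection between $\K(t)$ and $\cU_t(\alpha)$, and the objective scales as $\alpha^2$ times the objective of $(\hat{\cP})$ or $(\bar{\cP})$ depending on the sign of the cross term, so optimality transfers immediately. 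Your route buys several things: it needs no Lagrange multipliers or constraint qualification; it does not use $\Omega^{-1}$, whereas the paper's stationarity equation $u^*=\Omega^{-1}(\alpha\omega+H(t)^{\prime}\beta)$ implicitly requires $\Omega\succ 0$ even though the theorem only assumes $\Omega\succeq 0$ (harmless in the paper's application, where $\Omega=R(t)+\hat{G}(t)D(t)D(t)^{\prime}\succ 0$, but a gap at the stated level of generality); and it handles $\alpha=0$ explicitly rather than folding it silently into the $\alpha\geq 0$ case. What the KKT route buys in exchange is the explicit multiplier relation $\beta^*=\alpha\rho^*$, which could be reused for sensitivity or dual arguments, but is not needed for the statement itself. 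Your sign bookkeeping for $\alpha<0$ ($\alpha|\alpha|=-\alpha^2$) and the observation that the indicator formula reproduces the three cases are exactly right, so the proposal stands as a complete and, if anything, cleaner proof of Theorem~\ref{thm_sep} and of (\ref{thm_sep_obj}).
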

\begin{proof}
We first consider the case of $\alpha\geq 0$. Introducing Lagrange multipliers $\beta \geq \0$ for the constraint $H(t) u $ $\leq$ $|\alpha| d(t)$ in problem ($\cP(\alpha)$) the Karush-Kuhn-Tucker(KKT) condition(e.g., see \cite{Boyd:2004}) of this problem is,
\begin{align}
\begin{dcases}
u^*= \Omega^{-1}(\alpha \omega + H(t)^{\prime} \beta), \\
H(t)u^* \leq d(t) \alpha, \\
\big( H(t)u^*-\alpha d(t)\big)_i  \beta_i =0,~~\beta_i\geq 0, ~~\textrm{for}~~i=1,\cdots,k,
\end{dcases}\label{KKT1}
\end{align}
 where $\beta$$=$$(\beta_1, \cdots, \beta_k)^{\prime}$, ${u}^*$$=$$(u^*_1,\cdots, u^*_n)^{\prime}$ and $\big(H(t){u}^* - \alpha d(t)\big)_i$ is the $i$-th element of the vector $H(t){u}^* - \alpha d(t)$ for all $i=1,\cdots,k$. Similarly, we introduce the lagrange multiplier $\rho$ $=$ $(\rho_1,\cdots, \rho_n)^{\prime}$ $\geq$ $\0$ for problem ($\hat{\cP}$). The correspondent KKT condition is as follows,
\begin{align}
\begin{dcases}
\hat{K}= \Omega^{-1}(\omega+H(t)^{\prime}{\rho}), \\
 H(t) \hat{K} \leq d(t), \\
\big( H(t)\hat{K} - d(t) \big)_i {\rho}_i =0,~~{\rho}_i\geq0,~~\textrm{for}~~i=1,\cdots,k,
\end{dcases}\label{KKT2}
\end{align}
where $\rho$$=$$(\rho_1, \cdots, \rho_k)^{\prime}$, $\hat{K}$$=$$(\hat{K}_1,\cdots, \hat{K}_n)^{\prime}$ and $(H(t)\hat{K}-d(t))_i$ is the $i$-th element of $H(t)\hat{K}-d(t)$ for all $i=1,\cdots,k$. Comparing the two KKT conditions (\ref{KKT1}) and (\ref{KKT2}), we are able to construct the solution of the KKT condition (\ref{KKT1}) by using the solution of (\ref{KKT2}). More specifically, suppose $\rho^* $ and $\hat{K}^*$ solve the KKT condition (\ref{KKT2}). If we let ${\beta}^*$ $=$ $\alpha \rho^* $ and $u^*$ $=$ $\alpha \hat{K}$, then it can easily verified that pair $\{{u}^*, {\beta}^*\}$ satisfies the KKT condition (\ref{KKT1}). That is to say, $u^*$ $=$ $\alpha \hat{K}$ solves the problem ($\cP(\alpha)$) with the optimal value being $g(u^*,\alpha) =\alpha^2 \hat{g}(\hat{K})$. For the case of $\alpha<0$, we may adopt the similar analysis to prove the correspondent result. Thus, we omit the detail.
\end{proof}

Theorem \ref{thm_sep} shows that we can construct the solution of problem ($\cP(\alpha)$) by using the solutions of problems ($\hat{\cP}$) and ($\bar{\cP}$). This result turns out to be important for the stochastic dynamic optimization problem. For example, if $\alpha \in \bR$ represents the system state (allowing for positive or negative) in a stochastic system, solving problem ($\cP(\alpha)$) directly is difficult, since $\alpha$ is a random variable. However, by using the Theorem \ref{thm_sep}, we could solve two deterministic optimization problems $(\hat{\cP})$ and $(\bar{\cP})$ off-line for $\hat{K}$ and $\bar{K}$, respectively, and then construct the solution of the problem  ($\cP(\alpha)$).


\subsection{Solution of problem (\texorpdfstring{$\cP^{T}_{LQ}$}{Lg})}
As for the classical unconstrained stochastic LQ control problem( e.g., see \cite{YongZhou:1999}), it is well known that the solution of the correspondent Hamilton-Jacobi-Bellman (HJB) equation takes the quadratic form with respect to system state, which enables us to identify the analytical optimal feedback control law. However, due to the constraints in our model ($\cP_{\LQ}^T$), the traditional method can no longer be applied. Furthermore, the correspondent HJB equation does not admit smooth solution any more. To conquer this difficulty, we adopt the concept of \textit{viscosity solution} of the PDE for the constrained HJB equation induced from our model $(\cP_{\LQ}^T)$.

We first define the value function of problem ($\cP_{\LQ}^T$) at any time $t\in [0,T]$ as follows,
\begin{align*}
V(t,x)&:=\min_{\{u(s)\}|_{s=t}^{T}}~\E_t\left[ \int_{t}^{T}
\left(\begin{array}{cc}
u(s)\\
x(s)
\end{array}\right)^{\prime} Q(s)
\left(\begin{array}{cc}
u(s)\\
x(s)
\end{array}\right)ds~\right]+q_T x(T)^2\\
s.t.~&\begin{cases}
dx(s)=\big( A(s)x(s)+B(s)^{\prime}u(s) \big) ds\\
~~~~~~~~~~+\big( x(s)C(s)^{\prime}+u(s)^{\prime}D(s) \big)dW(s),\\
x(t)=x,
\end{cases} \\
&u(s)\in \cU_s(x(s)),~~s\in [t,T],
\end{align*}
From the classical optimal control theory( \cite{YongZhou:1999}, \cite{Pham:2009}), the value function  $V(t,x)$ satisfies the following HJB equation,\footnote{ The notation $V_x(t,x)$, $V_t(t,x)$ and $V_{xx}(t,x)$ denote the first and second order partial derivatives of the value function $V(t,x)$.}
\begin{align}
&V_t(t,x)+\inf_{u(t) \in \cU_t(x(t))} \Big\{\frac{1}{2}V_{xx}(t,x)\Big(C(t)^{\prime}C(t)x^2\notag\\
&~~~+2xC(t)^{\prime}D(t)^{\prime}u + u^{\prime}D(t)D(t)^{\prime}u\Big) + V_x(t,x)\notag\\
&~~~\times \Big(A(t)x+B(t)^{\prime}u\Big)+\left(\begin{array}{cc}
u\\
x
\end{array}\right)^{\prime}Q(t)\left(\begin{array}{cc}
u\\
x
\end{array}\right) \Big\} =0\label{def_HJB}
\end{align}
with the boundary condition $V(T,x)=q_Tx^2$. Before we give the main results, we first introduce the following two ordinary differential equations(ODEs) for two unknown functions $\hG(t)$ and $\bG(t)$
as follows,\footnote{ The notation $\dot{G}(t)$ represents $\frac{d G(t)}{dt}$ for some differentiable function $G(t)$.}
\begin{align}
\dot{\hat{G}}(t)=& -\hat{G}(t)C(t)^{\prime}C(t) - 2\hat{G}(t)A(t) - q(t) - \min_{K(t) \in \K(t)} \hat{f}(t,K(t),\hat{G}(t)) \label{def_hatGt}\\
\dot{\bar{G}}(t)=& -\bar{G}(t)C(t)^{\prime}C(t) - 2\bar{G}(t)A(t) - q(t) - \min_{K(t) \in \K(t)} \bar{f}(t,K(t),\bar{G}(t)) \label{def_barGt}
\end{align}
with the terminal conditions $\hat{G}(T)=q_T$ and $\bar{G}(T)=q_T$, where $\hat{f}(t,K(t),\hat{G}(t))$ and $\bar{f}(t,K(t),\bar{G}(t))$ are defined as follows, respectively,
\begin{align*}
\hat{f}(t,K(t),&\hat{G}(t))=K(t)^{\prime}[\hat{G}(t)D(t)D(t)^{\prime}+R(t)]K(t)\\
 & + 2\big(\hat{G}(t)C(t)^{\prime}D(t)^{\prime}+\hat{G}(t) B(t)^{\prime}+S(t)^{\prime} \big) K(t), \\
\bar{f}(t,K(t),&\bar{G}(t))=K(t)^{\prime}[\bar{G}(t)D(t)D(t)^{\prime}+R(t)]K(t)\\
 & - 2\big(\bar{G}(t)C(t)^{\prime}D(t)^{\prime}+\bar{G}(t) B(t)^{\prime}+S(t)^{\prime} \big) K(t).
\end{align*}
We need the following property of the ODEs (\ref{def_hatGt}) and (\ref{def_barGt}).
\begin{lemma}\label{lem_nonnegative}
Under the Assumption \ref{asmp_psd}, the solutions of the ODEs (\ref{def_hatGt}) and (\ref{def_barGt}) satisfy $\hG(t) \geq 0$ and $\bG(t)\geq 0$ for all $t\in [0,T]$.
\end{lemma}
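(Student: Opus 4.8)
The plan is to read Lemma~\ref{lem_nonnegative} as a comparison (maximum--principle) statement for the two scalar terminal--value ODEs~(\ref{def_hatGt}) and~(\ref{def_barGt}): I would show that the constant function $0$ is a subsolution of each equation while the terminal value $q_T$ is nonnegative, so that running the comparison backward from $t=T$ forces $\hG\ge0$ and $\bG\ge0$ on $[0,T]$. Throughout I take the existence of $\hG,\bG$ on $[0,T]$ as given, since it is part of the hypothesis.

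The decisive step is to evaluate the vector fields on the slice $\{G=0\}$. Writing $\dot{\hG}(t)=\hF(t,\hG(t))$ with $\hF(t,G):=-G\,C(t)'C(t)-2G\,A(t)-q(t)-\min_{K\in\K(t)}\hat f(t,K,G)$ (and similarly $\bF$ for $\bG$), and using $\hat f(t,K,0)=K'R(t)K+2S(t)'K$, one gets $\hF(t,0)=-q(t)-\min_{K\in\K(t)}\bigl(K'R(t)K+2S(t)'K\bigr)$. Assumption~\ref{asmp_psd} gives $Q(t)\succeq0$, so for every $K\in\bR^n$,
\[
\bigl(K',\,1\bigr)\,Q(t)\,\bigl(K',\,1\bigr)'=K'R(t)K+2S(t)'K+q(t)\ \ge\ 0 ,
\]
i.e.\ $K'R(t)K+2S(t)'K\ge -q(t)$ on $\bR^n$, hence a fortiori on $\K(t)$; therefore $\hF(t,0)\le -q(t)-(-q(t))=0$ for all $t$. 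Replacing $K$ by $-K$ in the same quadratic form shows $\bar f(t,K,0)=K'R(t)K-2S(t)'K\ge -q(t)$ on $\bR^n$, hence $\bF(t,0)\le0$ for all $t$ as well.

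It then remains to pass from ``$\hF(t,0)\le0$ together with $\hG(T)=q_T\ge0$'' to ``$\hG\ge0$''. The constant $y\equiv0$ is a supersolution, $\dot y=0\ge\hF(t,0)=\hF(t,y)$, with $y(T)=0\le q_T=\hG(T)$, so the scalar comparison theorem, applied backward in time, gives $\hG(t)\ge y(t)=0$ on $[0,T]$ (and the same for $\bG$). Spelled out by contradiction: if $\hG(t_1)<0$ for some $t_1<T$, set $t^\star:=\inf\{t\ge t_1:\hG(t)=0\}$, so $\hG<0$ on $[t_1,t^\star)$ and $\hG(t^\star)=0$; on this interval $\dot{\hG}(t)=\hF(t,\hG(t))\le\hF(t,0)+L\,|\hG(t)|\le -L\,\hG(t)$, where $L$ bounds $\hF(t,\cdot)$ near $0$, so $\frac{d}{dt}\bigl(e^{Lt}\hG(t)\bigr)\le0$ and hence $e^{Lt}\hG(t)\ge e^{Lt^\star}\hG(t^\star)=0$ on $[t_1,t^\star)$ --- contradicting $\hG<0$ there. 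The one point needing care is that Lipschitz bound, which I would obtain from the fact that $G\mapsto\min_{K\in\K(t)}\hat f(t,K,G)$ is an infimum of functions affine in $G$, hence concave, hence locally Lipschitz wherever finite.

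The genuine obstacle hides in ``wherever finite'': if $\K(t)$ has a nonzero recession direction $v$ with $v'R(t)v=0$, then, since $D(t)D(t)'\succ0$, this minimum equals $-\infty$ for every $G<0$, so $\hF(t,\cdot)$ is neither Lipschitz nor even finite just below $0$. But in exactly that case a solution of~(\ref{def_hatGt}) cannot attain a negative value at $t$ at all (the right--hand side would be $+\infty$ there), so the conclusion holds a fortiori, while when no such direction exists the minimum is finite near $0$ and the step above applies verbatim. An alternative that avoids this dichotomy is a verification argument: fix $x(t)=1$, run~(\ref{def_xt}) under the feedback $u(s)=x(s)\hat K(s)$ with $\hat K(s)\in\arg\min_{K\in\K(s)}\hat f(s,K,\hG(s))$ --- which keeps $x(\cdot)$ strictly positive --- and apply It\^o's formula to $\hG(s)x(s)^2$; by~(\ref{def_hatGt}) the drift collapses exactly to $-\bigl(u(s)',x(s)\bigr)Q(s)\bigl(u(s)',x(s)\bigr)'$, whence $\hG(t)=\E_t\!\left[q_Tx(T)^2+\int_t^T\bigl(u',x\bigr)Q(s)\bigl(u',x\bigr)'\,ds\right]\ge0$ by Assumption~\ref{asmp_psd}, at the cost of verifying measurability and square--integrability of $\hat K(\cdot)$.
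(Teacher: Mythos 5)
Your argument is correct in substance, but it takes a genuinely different route from the paper's. The paper proves the lemma by comparison with the \emph{unconstrained} problem: it introduces the Riccati equation (\ref{Gstar2}) obtained by dropping the constraint $K\in\K(t)$, invokes classical stochastic LQ theory for the nonnegativity of its solution $\hG^*(t)$, and then argues that the constrained minimum dominates the unconstrained one so that $0\le\hG^*(t)\le\hG(t)$ (with the same reasoning for $\bG$). You instead compare directly with zero: the single observation that $Q(t)\succeq0$ gives $K'R(t)K+2S(t)'K\ge-q(t)$ (and the same with $K\mapsto -K$ for the $\bar f$ case), hence the right-hand sides satisfy $\hF(t,0)\le0$ and $\bF(t,0)\le0$, so that $0$ acts as a barrier when the terminal-value problems are run backward from $q_T\ge0$; your Gr\"onwall-style contradiction argument then has the correct orientation. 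What your route buys is self-containedness (no appeal to the classical Riccati result) and an explicit treatment of the genuine subtlety that $\min_{K\in\K(t)}\hat f(t,K,G)$ can equal $-\infty$ for $G<0$ when $\K(t)$ is unbounded with an $R(t)$-null recession direction --- a point the paper's proof passes over silently, and indeed the paper's comparison step is stated loosely (the claimed inequality between the right-hand sides of (\ref{Gstar}) and (\ref{def_hatGt}) is written in the reverse direction, and ``$|\dot{\hG}^*|\le|\dot{\hG}|$, integrate'' is not by itself a comparison argument). What the paper's route buys is brevity, since nonnegativity is outsourced to the known unconstrained solution. The one point in your write-up that still deserves a line of care is the uniformity in $t$ of the local one-sided Lipschitz constant $L$ on $[t_1,t^\star]$: concavity of $G\mapsto\min_{K\in\K(t)}\hat f(t,K,G)$ gives a Lipschitz bound pointwise in $t$, but near the boundary of its finiteness region (or without continuity of $H(t)$, $R(t)$ in $t$) the constant can degenerate, so either a mild uniform nondegeneracy remark is needed or one should fall back on your alternative verification identity $\hG(t)x(t)^2=\E_t\bigl[q_Tx(T)^2+\int_t^T (u(s)',x(s))\,Q(s)\,(u(s)',x(s))'\,ds\bigr]\ge0$, which closes the argument cleanly modulo the measurability and integrability checks you already flag.
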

\begin{proof}
To prove $\hG(t)\geq 0$, we introduce another ODE as follows,
\begin{align}
\dot{\hat{G}}^*(t)&= -\hat{G}^*(t)C(t)^{\prime}C(t) - 2\hat{G}^*(t)A(t) - q(t) - \min_{K(t)} \hat{f}(t,K(t),\hat{G}^*(t)) \label{Gstar}
\end{align}
where $\hat{G}^*(t)$ is the unknown function. Solving the optimization problem in the right-hand side of the equation (\ref{Gstar}) yields
\begin{align}
\dot{\hat{G}}^*(t)&= -\hat{G}^*(t)C(t)^{\prime}C(t) - 2\hat{G}^*(t)A(t) - q(t) \notag\\
& -\big(\hG^*(t)D(t)C(t) + \hG^*(t)B(t) + S(t)\big)^{\prime} \notag\\
&\times\big(\hG^*(t)D(t)D(t)^{\prime} + R(t)\big)^{-1} \notag\\
&\times \big(\hG^*(t)D(t)C(t) + \hG^*(t)B(t) + S(t)\big). \label{Gstar2}
\end{align}
Note that the difference between (\ref{def_hatGt}) and (\ref{Gstar}) is that there is no control constraint in (\ref{Gstar}). Furthermore, it is not hard to see that the equation (\ref{Gstar2}) is the correspondent Riccati equation of the classical unconstrained LQ optimal control model. It has been known that the solution of (\ref{Gstar2}) satisfies $\hG^*(t)$$\geq$$0$ for all $t\in[0,T]$(see, e.g., Chapter 7 in\cite{YongZhou:1999}). Due to the constraint set, the right-hand side of equation (\ref{Gstar}) is no larger than the right-hand side of equation (\ref{def_hatGt}). Thus, it has $|\dot{\hat{G}}^*(t)|$ $\leq $
$|\dot{\hat{G}}(t)|$ for all $t$ $\in$ $[0,T]$. Taking the integration on both sides gives the $0\leq $ $ \hat{G}^*(t)$ $\leq $ $\hat{G}(t)$ for all $t\in [0,T]$. This proof can be also applied to the equation (\ref{def_barGt}). We omit the detail.
\end{proof}

Now, we provide the main result for problem ($\cP_{\LQ}^T$).
\begin{theorem} \label{thm_PLQ}
The optimal control policy of problem ($\cP_{\LQ}^{T}$) is
\begin{align}
u^*(t)=
\begin{cases}
~~~\hat{K}^*(t) x(t)~~&\textrm{if}~~x(t) \geq 0\\
-\bar{K}^*(t) x(t)~~&\textrm{if}~~x(t) < 0\\
\end{cases} \label{thm_PLQ_ut}
\end{align}
for $t\in[0,T]$, where $\hat{K}^*(t)$ and $\bar{K}^*(t)$ are defined as,
\begin{align}
\hat{K}^*(t)&=\arg \min_{K(t) \in \K(t)} \hat{f}(t,K(t),\hat{G}(t)), \label{def_hK}\\
\bar{K}^*(t)&=\arg \min_{K(t) \in \K(t)} \bar{f}(t,K(t),\bar{G}(t)), \label{def_bk}
\end{align}
respectively, and $\hat{G}(t)$ and $\bar{G}(t)$ are defined in (\ref{def_hatGt}) and (\ref{def_barGt}), respectively. Moreover, the optimal objective value for problem $(\cP_{\LQ}^T)$ is
\begin{align}
v(\cP_{\LQ}^T)=x(0)^2 \Big(\hat{G}(0) 1_{\{x(0)\geq0\}} + \bar{G}(0) 1_{\{x(0)<0\}} \Big)
\label{thm_PLQ_v0}
\end{align}
\end{theorem}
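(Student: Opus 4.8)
I would guess the value function, show it is a viscosity solution of the HJB equation (\ref{def_HJB}) carrying the correct terminal data, and then read off (\ref{thm_PLQ_ut})--(\ref{thm_PLQ_v0}) from a verification argument. The homogeneity of the dynamics (\ref{def_xt}) and of the constraint (\ref{def_Ut}) in $x$ dictates the piece-wise quadratic ansatz
\begin{align*}
\Phi(t,x)\ :=\ x^2\Big(\hG(t)\,\1_{\{x\geq 0\}}+\bG(t)\,\1_{\{x<0\}}\Big),
\end{align*}
with $\hG,\bG$ the solutions of (\ref{def_hatGt})--(\ref{def_barGt}). By Lemma \ref{lem_nonnegative}, $\Phi\geq 0$; since $\hG(T)=\bG(T)=q_T$ one has $\Phi(T,x)=q_Tx^2$, matching the boundary condition of (\ref{def_HJB}); and $\Phi$ is continuous, $C^{1}$ in $x$ with $\Phi_x(t,0)=0$, and $C^{2}$ on $\{x\neq 0\}$.

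\textbf{Checking the HJB off the kink.} On $\{x>0\}$ the substitution $u=Kx$ turns the constraint $H(t)u\leq d(t)|x|$ into $K\in\K(t)$ (a bijection for $x>0$); plugging $\Phi_t=\dot{\hG}(t)x^{2}$, $\Phi_x=2\hG(t)x$, $\Phi_{xx}=2\hG(t)$ into (\ref{def_HJB}) and factoring out $x^2$, the inner infimum becomes an instance of problem $(\hat{\cP})$ of Theorem \ref{thm_sep} with $\Omega=\hG(t)D(t)D(t)^{\prime}+R(t)$ and $\omega=\hG(t)D(t)C(t)+\hG(t)B(t)+S(t)$, i.e.\ $\min_{K\in\K(t)}\hat{f}(t,K,\hG(t))$, attained at $\hat{K}^*(t)$ from (\ref{def_hK}); the equation then collapses to the scalar ODE (\ref{def_hatGt}). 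On $\{x<0\}$ the analogous substitution $u=-Kx$ again reads $K\in\K(t)$, produces problem $(\bar{\cP})$ (now with $\hG$ replaced by $\bG$), collapses (\ref{def_HJB}) to (\ref{def_barGt}), and yields Hamiltonian minimizer $u=-\bar{K}^*(t)x$. Thus $\Phi$ solves (\ref{def_HJB}) classically on $\{x\neq 0\}$ and the pointwise minimizer of the Hamiltonian is exactly the feedback (\ref{thm_PLQ_ut}).

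\textbf{The kink $x=0$ and verification.} Here I would verify the two viscosity inequalities by hand. For a $C^{2}$ test function $\phi$ touching $\Phi$ at $(t,0)$, nonnegativity of $\Phi$ with $\Phi(t,0)=0$ forces $\phi_x(t,0)=0$; since $\cU_t(0)=\{u:H(t)u\leq\0\}$ is a cone containing $\0$ and $\phi_x(t,0)=0$, the Hamiltonian integrand at $x=0$ is $u^{\prime}\big(\tfrac{1}{2}\phi_{xx}(t,0)D(t)D(t)^{\prime}+R(t)\big)u\geq 0$ with minimum $0$ at $u=\0$ (using $D(t)D(t)^{\prime}\succ 0$, $R(t)\succeq 0$); together with $\Phi_t(t,0)=0$ this delivers both inequalities. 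With $\Phi$ a continuous viscosity solution of (\ref{def_HJB}) carrying the right terminal value, the standard verification/uniqueness theorem for HJB equations (\cite{YongZhou:1999}\cite{Pham:2009}) gives $\Phi\equiv V$ and optimality of the Hamiltonian-minimizing feedback (\ref{thm_PLQ_ut}); in particular $v(\cP_{\LQ}^T)=V(0,x(0))=\Phi(0,x(0))$, which is (\ref{thm_PLQ_v0}).

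\textbf{Where the difficulty lies.} Two points carry the real work. First, the non-smoothness of $\Phi$ at $x=0$, handled through the viscosity framework above (equivalently, in a direct It\^o/verification argument one notes that $\{t:x(t)=0\}$ is a.s.\ Lebesgue-null, so the undefined second derivative at the kink never contributes to the cost integral). Second, admissibility and well-posedness of the candidate control: one must check that the quadratic programs (\ref{def_hK})--(\ref{def_bk}) attain their minima with $\hat{K}^*(\cdot),\bar{K}^*(\cdot)$ bounded and measurable in $t$, so that $u^*(\cdot)\in\L_{\F}^2(0,T;\bR^n)$ and the closed-loop SDE is well posed; feasibility of $u^*$ is then immediate from $\hat{K}^*(t),\bar{K}^*(t)\in\K(t)$, since $H(t)u^*(t)=H(t)\hat{K}^*(t)|x(t)|\leq d(t)|x(t)|$ on $\{x(t)\geq 0\}$ and $H(t)u^*(t)=H(t)\bar{K}^*(t)|x(t)|\leq d(t)|x(t)|$ on $\{x(t)<0\}$.
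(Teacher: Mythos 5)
Your proposal is correct and follows essentially the same route as the paper: the piece-wise quadratic ansatz built from the two Riccati ODEs, reduction of the inner Hamiltonian minimization to the deterministic programs via the state-separation result (Theorem \ref{thm_sep}) on $\{x>0\}$ and $\{x<0\}$, and a viscosity sub/supersolution check at the kink $x=0$ followed by the verification theorem (the paper does this with the second-order super/subdifferentials $D^{1,2,\pm}V$ and the verification result of Zhou--Yong--Li, which is the test-function argument you sketch). Your added remarks on measurability, boundedness and admissibility of the feedback gains are a sensible supplement the paper leaves implicit.
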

\begin{proof}
The structure of the proof is as follows. We verify that the following value function,
\begin{align}
V(t,x)=x^2 \Big(\hat{G}(t) 1_{\{x\geq0\}} + \bar{G}(t) 1_{\{x<0\}} \Big)\label{thm_HJB_vt}
\end{align}
is the viscosity solution for the HJB equation (\ref{def_HJB}) and the associated optimal feedback control is,
\begin{align}
u^{\dag}(t,x)=
\begin{cases}
~~\hat{K}^*(t) x~~&\textrm{if}~~x \geq 0,\\
-\bar{K}^*(t) x~~&\textrm{if}~~x < 0.\\
\end{cases} \label{thm_HJB_ut}
\end{align}
Substituting the dummy variable $x$ by state variable $x(t)$ in (\ref{thm_HJB_vt}) and (\ref{thm_HJB_ut}), we can obtain the optimal control policy (\ref{thm_PLQ_ut}) and optimal objective value (\ref{thm_PLQ_v0}) for problem $(\cP_{\LQ}^T)$, respectively.

To show that (\ref{thm_HJB_vt}) is the viscosity solution of (\ref{def_HJB}), we first consider the region $\Phi_1$ in the $(t,x)$-plane as,
\begin{align*}
\Phi_1=\Big\{(t,x)\in[0,T]\times \bR ~\big|~ x > 0  \Big\}.
\end{align*}
Obviously, in region $\Phi_1$, the value function (\ref{thm_HJB_vt}) becomes $V(t,x)$$=$$x^2 \hat{G}(t)$, which enables us to compute the derivatives as,
\begin{align*}
V_t(t,x)=\dot{\hat{G}}(t)x^2,~~V_x(t,x)=2\hat{G}(t)x,~~V_{xx}(t,x)=2\hat{G}(t).
\end{align*}
We then substitute these terms to HJB equation (\ref{def_HJB}). The left-hand side(LHS) of (\ref{def_HJB}) becomes
\begin{align}
\textrm{LHS}=&\dot{\hat{G}}(t)x^2 + \hat{G}(t)C(t)^{\prime}C(t)x^2 + 2\hat{G}(t)A(t)x^2 \notag\\
& + q(t)x^2 + \min_{u\in \cU_t(x)} L(x,u,t), \label{thm_HJB_Phi1_LHS}
\end{align}
where $L(x,u,t)$ is defined as,
\begin{align*}
L(x,u,t)&= u^{\prime}\big(\hat{G}(t)D(t)D(t)^{\prime}+R(t)\big)u \notag\\
& + 2x\big(\hat{G}(t)C(t)^{\prime}D(t)^{\prime}+\hat{G}(t) B(t)^{\prime}+S(t)^{\prime} \big)u.
\end{align*}
Note that, from Assumption \ref{asmp_psd} and Lemma \ref{lem_nonnegative}, it has $R(t)$$+$$\hat{G}(t)D(t)D(t)^{\prime}$$\succ$$0$. Thus, the inner optimization problem,  $\min_{u\in \cU_t(x)}$ $L(x,u,t)$, in  equation (\ref{thm_HJB_Phi1_LHS}) is a convex optimization problem, which admits the unique optimal solution. Now we apply the Theorem \ref{thm_sep} to this problem. More specifically, we regard terms $R(t)$$+$$\hat{G}(t)D(t)D(t)^{\prime}$, $\hat{G}(t)C(t)^{\prime}D(t)^{\prime}$ $+$ $\hat{G}(t) B(t)^{\prime}$$+$$S(t)^{\prime}$ and $x$ as the terms $\Omega$, $\omega$ and $\alpha$ in problem $(\cP(\alpha))$, respectively. Thus, in region $\Phi_1$, it has
\begin{align*}
 x\hK^*(t) = \arg \min_{u \in \cU_t(x)} L(x,u,t),
\end{align*}
where $\hK^*(t)$ is defined in (\ref{def_hK}). From (\ref{thm_sep_obj}), the LHS (\ref{thm_HJB_Phi1_LHS}) can be written as,
\begin{align*}
\textrm{LHS}&=x^2\Big(\dot{\hat{G}}(t) + \hat{G}(t)C(t)^{\prime}C(t)+ 2\hat{G}(t)A(t)\notag\\
& + q(t)+ \min_{K(t) \in \K(t)} \hat{f}(t,K(t),\hat{G}(t))\Big).
\end{align*}
From the definition (\ref{def_hatGt}), it has $\textrm{LHS}$ $=$ $0$, which implies that the value function (\ref{thm_HJB_vt}) is the solution of the HJB equation (\ref{def_HJB}) in region $\Phi_1$ with the optimal control police being $x\hK^*(t)$.

Now we consider the region $\Phi_2$ in the $(t,x)$-plane as,
\begin{align*}
\Phi_2=\Big\{(t,x)\in[0,T]\times \bR ~\big|~ x < 0 \Big\}.
\end{align*}
In this region, the value function becomes $V(t,x)$ $=$ $x^2 \bG(t)$. We can apply the similar solution procedure for region $\Phi_2$ to prove that the value function, $V(t,x)$ $=$ $x^2\bG(t)$, is the solution of the HJB equation (\ref{def_HJB}) and derive the associated optimal control policy as (\ref{thm_HJB_ut}). The detail is omitted.

Finally, we consider the region $\Phi_3$ in $(t,x)$-plane as,
\begin{align*}
\Phi_3=\Big\{(t,x)\in[0,T]\times \bR ~|~ x = 0  \Big\}.
\end{align*}
The nonsmoothness property of $V(t,x)$ arises in this region. However, the value function $V(t,x)$ is still continuous for any $(t,x) \in \Phi_3$, since it has $V(t,x)$ $=$ $\hat{G}(t)x^2$ $=$ $\bar{G}(t)x^2$ $=$$0$ in this region. In addition, the first order partial derivative also exists for $V(t,x)$ at any points in region $\Phi_3$, i.e., it has $V_t(t,x)$ $=$ $\dot{\hat{G}}x^2$ $=$ $\dot{\bar{G}}x^2$ $=$ $0$ and $V_x(t,x)$$=$$2{\hat{G}}x$$=$$2{\bar{G}}x$$=0$ for $(t,x)$$\in$$\Phi_3$. However, due to $\hat{G}(t)$$\neq$$\bar{G}(t)$, the second order partial derivative, $V_{xx}(t,x)$, does not exist at any points on $\Phi_3$, which causes the nonsmoothness of $V(t,x)$. For such reason, we need to use the framework of viscosity solutions to conquer this difficulty. We follow the similar idea of the verification theorem given in \cite{ZhouYongLi:1997} and \cite{LiZhouLim:2002} to show that the value function (\ref{thm_HJB_vt}) is the viscosity solution for the HJB equation (\ref{def_HJB}). According to the theory of viscosity solutions(see \cite{YongZhou:1999} \cite{ZhouYongLi:1997}), we introduce the second-order superdifferential and subdifferential of $V(t,x)$ at $(t,x)$ $\in$ $\Phi_3$ as follows,\footnote{The detail definition of the second-order supperdifferential and subdifferential can be found in Chp. 5 in reference \cite{YongZhou:1999}}
\begin{align*}
\begin{cases}
D_{t,x}^{1,2,+}V(t,x)=\{0\} \times \{0\} \times [\hat{G}(t),+\infty),\\
D_{t,x}^{1,2,-}V(t,x)=\{0\} \times \{0\} \times (-\infty,\bar{G}(t)],
\end{cases}
\end{align*}respectively. Moreover, we define the following term for the HJB equation (\ref{def_HJB}),
\begin{align*}
W(t,x,u,p,P)
&=\frac{1}{2}P\big(C(t)C(t)^{\prime}x^2+2xC(t)^{\prime}D(t)^{\prime}u+u^{\prime}D(t)D(t)^{\prime}u\big)\\
&+p\big(A(t)x+B(t)^{\prime}u\big) + \big( q(t)x^2+2xS(t)^{\prime}u+u^{\prime}R(t)u \big).
\end{align*}
For any $(\eta,p,P)$ $\in$ $D_{t,x}^{1,2,+}V(t,x)$ with $(t,x)$ $\in$ $\Phi_3$, it has
\begin{align*}
& \eta+\min_{u \in \cU_t(x)} W(t,x,u,p,P)\\
= & \min_{u \in \cU_t(x)} \Big\{ \frac{1}{2}u^{\prime}\big(P D(t)D(t)^{\prime} + 2R(t) \big)u \Big\}\\
\geq & \min_{u \in \cU_t(x)} \Big\{ \frac{1}{2}u^{\prime}\big(\hat{G}(t) D(t)D(t)^{\prime} + 2R(t) \big) u \Big\}=0.
\end{align*}
Hence, we can seen that $V(t,x)$ is a viscosity subsolution for the HJB equation (\ref{def_HJB}).\footnote{The definition of the subsolution and supsolution can be found in \cite{YongZhou:1999}.} On the other hand, for any $(\eta,p,P)$ $\in$ $D_{t,x}^{1,2,-}V(t,x)$ with $(t,x)$ $\in$ $\Phi_3$, it has,
\begin{align*}
& \eta+\min_{u \in \cU_t(x)} W(t,x,u,p,P)\\
=& \min_{u \in \cU_t(x)} \Big\{ \frac{1}{2}u^{\prime}\big(P D(t)D(t)^{\prime} + 2R(t) \big)u \Big\}\\
\leq & \min_{u \in \cU_t(x)} \Big\{ \frac{1}{2}u^{\prime}\big(\bar{G}(t) D(t)D(t)^{\prime} + 2R(t) \big)u \Big\}=0,
\end{align*}
which shows that $V(t,x)$ is a viscosity supersolution of the HJB equation (\ref{def_HJB}). Overall, we can obtain that the value function $V(t,x)$ given in (\ref{thm_HJB_vt}) is the viscosity solution for the HJB equation (\ref{def_HJB}) with terminal condition $V(T,x)$$=$$q_Tx^2$. Note that, for any $(t,x)$ $\in$ $\Phi_3$, taking $(\eta^*(t,x),p^*(t,x),P^*(t,x))$ $=$ $(0,0,\hat{G}(t))$ $\in$ $D_{t,x}^{1,2,+}V(t,x)$ and $u^{\dag}(t,x)=0$, it has
\begin{align*}
\eta^*(t,x)+W(t,x,u^{\dag}(t,x),p^*(t,x),~P^*(t,x))=0.
\end{align*}
Therefore, based on the verification theorem in Theorem 3.1 of \cite{ZhouYongLi:1997}, we obtain that $u^{\dag}(t,x)$ defined in expression (\ref{thm_HJB_ut}) is the optimal feedback control.
\end{proof}

In Theorem \ref{thm_PLQ}, the two Raccati equations (\ref{def_hatGt}) and (\ref{def_barGt}) plays the same role as the single Riccati equation induced from the classical unconstrained continuous-time stochastic LQ control problem. If there is no constraints $\cU_t(x(t))$, these two equations will merge to a single equation, which is the one generated from the classical LQ control problem. To solve the equations (\ref{def_hatGt}) and (\ref{def_barGt}), usually, we need to use the numerical methods, since there is no analytical solution for the inner optimization problems in both equations for a general linear constraints. More specifically, we can discretize the time horizon $[0,T]$ in to some discrete time points. We then approximate these two equations by the difference formula according to these discrete sample periods. At any fixed period, we could solve the inner optimization problem and achieve the solution of these equations.

\section{Optimal Solution for Problem (\texorpdfstring{$\cP_{\LQ}^{\infty}$}{Lg})}\label{se_sol_Pinf}
In this section, we focus on developing the stationary control policy for the problem ($\cP_{\LQ}^{\infty}$). The main idea is as follows. We first study the associated problem with a finite-horizon and then investigate the asymptotic performance by extending the time horizon to infinity. Introduce the following auxiliary problem with finite horizon $T$,
\begin{align}
(\cA^{T})&~\min_{\{u(t)\}_{t=0}^T}~\E \left[ \int_{0}^{T}
\left(\begin{array}{cc}
u(t)\\
x(t)
\end{array}\right)^{\prime}Q
\left(\begin{array}{cc}
u(t)\\
x(t)
\end{array}\right)dt~\right] \label{def_A_T}\\
{\rm s.t.}~~&~\begin{cases}
dx(t)=\big( Ax(t)+B^{\prime}u(t)\big )dt\\
~~~~~~~~~~+\big( x(t)C^{\prime}+u(t)^{\prime}D \big)dW(t), \\
x(0)=x_0,\\
u(t)\in \tilde{\cU}_t(x(t)), ~~t\in[0,T].
\end{cases} \notag
\end{align}
Obviously, problem ($\cA^{T}$) is just a special case of problem ($\cP_{\LQ}^{T}$), if we set the the all parameters in problem ($\cP_{\LQ}^T$) as the time invariant constants and set $q_T=0$. To investigate the property of problem ($\cA^T$) for different time horizon length, we introduce the notations $\hat{G}(t;T)$ and $\bar{G}(t;T)$ to denote the outputs(solutions) of ODEs (\ref{def_hatGt}) and (\ref{def_barGt}) at time $t$ with the horizon being $T$ and the terminal conditions being $\hat{G}(T;T)$ $=$ $\bar{G}(T;T)$$=$ $0$, respectively.

Based on Theorem \ref{thm_PLQ}, the optimal objective value of the auxiliary problem ($\cA^{T}$) can be expressed as,
\begin{align}
v(\cA^T)=x_0^2 \Big( \hat{G}(0;T) \1_{\{x_0\geq0\}} + \bar{G}(0;T) \1_{\{x_0<0\}} \Big). \label{thm_ALQ_v0}
\end{align}
We can see that the optimal value of problem $(\cA^T)$ is either $x_0^2\hat{G}(0;T)$ or $x_0^2\bar{G}(0;T)$, which depends on the sign of the initial state $x_0$. In addition, the outputs $\hat{G}(t;T)$ and $\bar{G}(t;T)$ have the following property.

\begin{lemma} \label{lem_G_homo}
Consider the problems ($\cA^{T}$) and ($\cA^{T+\tau}$) with $\tau>0$. Let \{$\hat{G}(t;T)$, $\bar{G}(t;T)$\} and \{$\hat{G}(t;T+\tau)$, $\bar{G}(t;T+\tau)$\} be the solutions of ODEs (\ref{def_hatGt}) and (\ref{def_barGt}) at time $t$ for these two problems, respectively. It has $\hat{G}(t;T)$ $=$ $\hat{G}(t+\tau;T+\tau)$ and $\bar{G}(t;T)$ $=$ $\bar{G}(t+\tau;T+\tau)$, for any $t\in[0,T]$.
\end{lemma}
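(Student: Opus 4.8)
The plan is to exploit the fact that, once all the data $A$, $B$, $C$, $D$, $Q$ and the polyhedron $\K$ are time-invariant, the right-hand sides of the ODEs (\ref{def_hatGt}) and (\ref{def_barGt}) carry no explicit dependence on $t$; i.e.\ these ODEs are \emph{autonomous}, and the flow of an autonomous ODE commutes with translations of the time variable. Since the two solutions in question, $\hat G(\cdot\,;T)$ and $\hat G(\cdot\,;T+\tau)$, both carry the terminal value $0$ (at time $T$ and at time $T+\tau$ respectively), after the shift the two solution curves must coincide on the overlapping interval. Concretely, I would rewrite (\ref{def_hatGt}) as $\dot{\hat G}(t)=\hat F\big(\hat G(t)\big)$, where $\hat F(G):=-(C^{\prime}C+2A)G-q-\min_{K\in\K}\hat f(K,G)$ with $\hat f(K,G)=K^{\prime}(GDD^{\prime}+R)K+2(GC^{\prime}D^{\prime}+GB^{\prime}+S^{\prime})K$ --- a function of $G$ alone --- and likewise write (\ref{def_barGt}) as $\dot{\bar G}(t)=\bar F\big(\bar G(t)\big)$.

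The first substantive step is well-posedness. For $G\geq 0$, Assumption \ref{asmp_C} gives $GDD^{\prime}+R\succeq R\succ 0$, so the inner problem $\min_{K\in\K}\hat f(K,G)$ is a strictly convex, coercive quadratic program over the fixed (nonempty, by Assumption \ref{asmp_feasibility_infinite}) polyhedron $\K$; hence its value $m(G):=\min_{K\in\K}\hat f(K,G)$, being a pointwise infimum of affine functions of $G$, is concave and finite on $[0,\infty)$, therefore locally Lipschitz on $(0,\infty)$, and so is $\hat F$ there. Moreover $\hat F(0)=-q-m(0)\leq -q+S^{\prime}R^{-1}S=-(q-S^{\prime}R^{-1}S)<0$, since $Q\succ 0$ forces the Schur complement $q-S^{\prime}R^{-1}S$ to be positive; thus, integrating (\ref{def_hatGt}) backward from $\hat G(T;T)=0$, the solution becomes strictly positive on $[0,T)$ --- where local Lipschitz continuity applies --- and a Gr\"onwall estimate near the terminal time disposes of the single value $G=0$. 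Consequently the terminal-value problem $\dot G=\hat F(G)$, $G(s)=0$, has a unique solution on any interval ending at $s$, and the same holds for $\bar F$ (with $S$ replaced by $-S$ in the estimate).

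With uniqueness in hand, set $\psi(t):=\hat G(t+\tau;T+\tau)$ for $t\in[0,T]$. Then $\dot\psi(t)=\dot{\hat G}(t+\tau;T+\tau)=\hat F\big(\hat G(t+\tau;T+\tau)\big)=\hat F\big(\psi(t)\big)$ and $\psi(T)=\hat G(T+\tau;T+\tau)=0$, so $\psi$ and $\hat G(\cdot\,;T)$ solve the same terminal-value problem on $[0,T]$; by uniqueness $\psi(t)=\hat G(t;T)$, i.e.\ $\hat G(t;T)=\hat G(t+\tau;T+\tau)$ for all $t\in[0,T]$. The identity $\bar G(t;T)=\bar G(t+\tau;T+\tau)$ follows verbatim with $\hat f,\hat F,\hat G$ replaced by $\bar f,\bar F,\bar G$.

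The one genuinely delicate point is the well-posedness of the autonomous ODE --- the regularity of $G\mapsto m(G)$ and the behaviour at the terminal value $G=0$ --- which is why I lean on the strict definiteness in Assumption \ref{asmp_C} together with the sign information of Lemma \ref{lem_nonnegative}; everything else is just translation-invariance of autonomous flows. Alternatively, one can bypass the ODE-regularity discussion entirely: by Theorem \ref{thm_PLQ}, $\hat G(t;T)$ is the coefficient in the optimal value of the tail of problem $(\cA^{T})$ on $[t,T]$, and since the dynamics and running cost are time-homogeneous, this tail problem is carried onto the tail of $(\cA^{T+\tau})$ on $[t+\tau,T+\tau]$ by a translation of the time origin, which leaves the optimal value unchanged; this yields $\hat G(t;T)=\hat G(t+\tau;T+\tau)$ directly, and likewise for $\bar G$.
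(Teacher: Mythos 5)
Your argument is correct, but your primary route is not the one the paper takes. The paper disposes of this lemma in one line: it invokes Theorem \ref{thm_PLQ}, under which $x^2\hG(t;T)$ (resp.\ $x^2\bG(t;T)$) is the optimal value of the tail of $(\cA^T)$ on $[t,T]$, and since dynamics, cost and constraint set are time-invariant, a shift of the time origin maps this tail problem onto the tail of $(\cA^{T+\tau})$ on $[t+\tau,T+\tau]$ without changing its value --- exactly the ``alternative'' you sketch in your last sentences. Your main argument instead works directly on the Riccati-type ODEs: you observe they are autonomous, so a time translation of $\hG(\cdot\,;T+\tau)$ solves the same terminal-value problem as $\hG(\cdot\,;T)$, and you then supply the uniqueness/well-posedness discussion (finiteness and concavity of $G\mapsto\min_{K}\hat f(K,G)$, strict positivity of the backward solution via $\hat F(0)<0$ from the Schur complement of $Q\succ 0$) that the paper never makes explicit. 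That is a legitimate and in some ways more self-contained route: it does not lean on the verification theorem, and it honestly addresses the fact that ``the solution'' of (\ref{def_hatGt})--(\ref{def_barGt}) is only well defined once uniqueness is known, whereas the paper's route gets uniqueness for free by identifying $\hG,\bG$ with value-function coefficients. Two small remarks on your version: the relevant constraint set here is the time-invariant $\tK=\{K\,|\,HK\le d\}$ of the infinite-horizon formulation, not $\K(t)$; and the delicate point at $G=0$ can be handled more cleanly than by a Gr\"onwall patch --- since $R\succ 0$, the minimizers of the inner QP stay in a fixed bounded set for $G$ in any bounded interval of $[0,\infty)$, so $G\mapsto\min_{K\in\tK}\hat f(K,G)$ is Lipschitz up to $G=0$ (alternatively, for a scalar autonomous ODE non-uniqueness can only occur at zeros of the right-hand side, and $\hat F(0)<0$), so the terminal-value problem is well posed without any extra estimate.
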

Lemma \ref{lem_G_homo} can be easily verified by applying Theorem \ref{thm_PLQ} to problems $(\cA^T)$ and $(\cA^{T+\tau})$. Lemma \ref{lem_G_homo} shows that outputs $\hat{G}(t;T)$ and $\bar{G}(t;T)$ are time homogeneous, that is, these two functions, $\hat{G}(t;T)$ and $\bar{G}(t;T)$ only depend on time difference $T-t$.

Introduce the following set,
$$\tK:=\{K\in \bR^n~|~HK\leq d~\},$$
which is associated with the constraint set $\tilde{\cU}_t(x(t))$. We then present the main result of problem ($\cP_{\LQ}^{\infty}$).
\begin{theorem} \label{thm_PLQ_inf}
For auxiliary problem $(\cA^T)$, if $\hG(0;T)$ and $\bG(0;T)$ have a uniformed upper bound $M>0$ for any $T$, then there exists limits $\hG^*>0$ and $\bG^*>0$ such that $\hG^*$ $=$ $\lim_{T\rightarrow \infty} \hG(0;T)$ and $\bG^*$ $=$ $\lim_{T\rightarrow \infty} \bG(0;T)$. Moreover, $\hG^*$ and $\bG^*$  satisfy the following equations,
\begin{align}
0=& -\hat{G}^*C^{\prime}C - 2\hat{G}^*A - q - \min_{K \in \tK} \hat{f}(K,\hat{G}^*), \label{def_equ_hG}\\
0=& -\bar{G}^*C^{\prime}C - 2\bar{G}^*A - q - \min_{K \in \tK} \bar{f}(K,\bar{G}^*), \label{def_equ_bG}
\end{align}
where $\hat{f}(K,\hat{G}^*)$ and $\bar{f}(K,\bar{G}^*)$ are defined as,
\begin{align}
\hat{f}(K,\hat{G}^*)=&K^{\prime}\big(\hat{G}^*DD^{\prime}+ R\big)K + 2\big(\hat{G}^*C^{\prime}D^{\prime} +\hat{G}^*B^{\prime}+S^{\prime} \big) K, \label{def_equ_hf}\\
\bar{f}(K,\bar{G}^*)=&K^{\prime}\big(\bar{G}^*DD^{\prime}+ R\big)K - 2\big(\bar{G}^*C^{\prime}D^{\prime} +\bar{G}^*B^{\prime}+S^{\prime} \big) K, \label{def_equ_bf}
\end{align}
respectively. The optimal control policy $u^*(t)$ for problem $(\cP_{\LQ}^{\infty})$ is
\begin{align}
u^*(t)=
\begin{cases}
~~~\hat{K}^* x(t)~~\textrm{if}~~x(t) \geq 0\\
-\bar{K}^* x(t)~~\textrm{if}~~x(t) < 0\\
\end{cases} \label{thm_PLQ_ut_inf}
\end{align}
with $\hat{K}^*$ and $\bar{K}^*$ given by,
\begin{align}
\hat{K}^*=&\arg \min_{K \in \tK} \hat{f}(K,\hat{G}^*)  \label{def_equ_hK}\\
\bar{K}^*=&\arg \min_{K \in \tK} \bar{f}(K,\bar{G}^*). \label{def_equ_bK}
\end{align}
Moreover, the optimal objective value for problem $(\cP_{\LQ}^{\infty})$ is given by,
\begin{align}
v(\cP_{\LQ}^{\infty})=x(0)^2 \big( \hat{G}^* 1_{\{x(0)\geq 0\}} + \bar{G}^* 1_{\{x(0)<0\}} \big).
\label{thm_PLQ_v0_inf}
\end{align}
Furthermore, under the stationary optimal control policy (\ref{thm_PLQ_ut_inf}), the closed-loop system (\ref{def_xt_inf}) is $L^2$-asymptotically stable, i.e.,
\begin{align*}
\lim_{t\rightarrow\infty}\E[(x^*(t))^2]=0.
\end{align*}
\end{theorem}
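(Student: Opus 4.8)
The plan is to obtain the infinite-horizon result as the limit of the finite-horizon theory developed in Section \ref{se_sol_P}, in four stages: (i) establish convergence of $\hat{G}(0;T)$ and $\bar{G}(0;T)$; (ii) identify the limits with the algebraic equations \eqref{def_equ_hG}--\eqref{def_equ_bG}; (iii) verify that the stationary feedback \eqref{thm_PLQ_ut_inf} is optimal for $(\cP_{\LQ}^\infty)$; and (iv) prove $L^2$-asymptotic stability of the closed loop. I would first exploit Lemma \ref{lem_G_homo}: because $\hat{G}(\cdot;T)$ depends only on $T-t$, the map $T \mapsto \hat{G}(0;T)$ equals $\hat{G}(T-s;T')$ evaluated along the common trajectory, and one checks it is monotone nondecreasing in $T$ — intuitively, a longer horizon with nonnegative running cost $Q\succeq 0$ can only increase the minimal accumulated cost, which by \eqref{thm_ALQ_v0} is exactly $x_0^2\hat{G}(0;T)$ (resp. $\bar G$). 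Combined with the hypothesised uniform upper bound $M$, monotone convergence yields the limits $\hat{G}^* = \lim_{T\to\infty}\hat{G}(0;T) \le M$ and likewise $\bar{G}^*$; positivity $\hat{G}^*>0$ follows since by Lemma \ref{lem_nonnegative} and the unconstrained comparison $\hat G^*(t)$ in the proof of that lemma, $\hat{G}(0;T)$ is bounded below by the unconstrained Riccati value, which is strictly positive once $Q\succ 0$ (Assumption \ref{asmp_C}).

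For stage (ii), I would pass to the limit in the ODE \eqref{def_hatGt}: using time-homogeneity, set $\hat g(s) := \hat{G}(T-s;T)$, so $\hat g$ solves the autonomous ODE $\dot{\hat g}(s) = \hat g C'C + 2\hat g A + q + \min_{K\in\tK}\hat f(K,\hat g(s))$ forward in $s$, with $\hat g(0)=0$ and $\hat g(s)\uparrow \hat{G}^*$ as $s\to\infty$. Since the right-hand side is continuous in $\hat g$ (the $\min$ of a continuous family over the fixed polyhedron $\tK$ is continuous, and the inner quadratic has the coercive leading term $\hat g DD'+R\succ 0$ by Assumption \ref{asmp_C} and $\hat g\ge 0$, so the minimiser stays bounded), a standard argument — $\dot{\hat g}(s)\to 0$ along a subsequence because $\hat g$ is monotone and bounded, forcing the right-hand side evaluated at $\hat{G}^*$ to vanish — gives \eqref{def_equ_hG}. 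The same for $\bar G$ gives \eqref{def_equ_bG}, and continuity of the argmin map (here I would note that for a strictly convex quadratic over a polyhedron the minimiser is unique and depends continuously on the coefficients) identifies $\hat{K}^*,\bar{K}^*$ in \eqref{def_equ_hK}--\eqref{def_equ_bK} as the limits of $\hat{K}^*(t),\bar{K}^*(t)$.

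Stage (iii) is a verification argument: I would show $V^\infty(x) := x^2\big(\hat{G}^*1_{\{x\ge0\}} + \bar{G}^*1_{\{x<0\}}\big)$ solves the stationary HJB equation associated with $(\cP_{\LQ}^\infty)$ — this is immediate from \eqref{def_equ_hG}--\eqref{def_equ_bG} on $\{x>0\}$ and $\{x<0\}$, and the nonsmooth point $x=0$ is handled by the same viscosity sub/superdifferential computation as in the proof of Theorem \ref{thm_PLQ} (the second-order super/subdifferentials at $x=0$ are $\{0\}\times\{0\}\times[\hat G^*,\infty)$ and $\{0\}\times\{0\}\times(-\infty,\bar G^*]$). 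Then, applying Itô's formula to $V^\infty(x(t))$ along any admissible control and using $Q\succ 0$ for the integrability/transversality bookkeeping, one obtains $\E[\int_0^T (u,x)'Q(u,x)\,dt] \ge V^\infty(x_0) - \E[V^\infty(x(T))] \ge V^\infty(x_0) - M\,\E[x(T)^2]$; the feedback \eqref{thm_PLQ_ut_inf} attains equality in the HJB minimisation pointwise, so it is optimal once the boundary term $\E[V^\infty(x(T))]\to 0$, which is precisely what stage (iv) supplies. For stage (iv), under the closed loop $x^*(t)$ satisfies a linear SDE with constant coefficients, and $m(t):=\E[(x^*(t))^2]$ solves a scalar linear ODE $\dot m(t) = \lambda m(t)$ with $\lambda$ read off from $A, C, B, D$ and $\hat K^*$ on $\{x^*\ge 0\}$, $\bar K^*$ on $\{x^*<0\}$; I would show $\lambda<0$ by plugging $\hat K^*$ back into \eqref{def_equ_hG}, which rearranges to $\hat G^*(C'C + 2A + 2B'\hat K^* + \hat K^{*\prime}DD'\hat K^*\,\hat G^{*-1}\cdots) = -q - \hat K^{*\prime}R\hat K^* - 2S'\hat K^* \le 0$ while $\hat G^*>0$, forcing the bracket (which is exactly the growth rate $\lambda$) to be negative. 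Hence $m(t)\to 0$ exponentially, closing the verification and giving \eqref{thm_PLQ_v0_inf}.

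The main obstacle I anticipate is stage (iv) — pinning down the sign of the closed-loop growth rate $\lambda$ from the algebraic Riccati relations \eqref{def_equ_hG}--\eqref{def_equ_bG}, and doing so consistently across the switching surface $x=0$ (the closed-loop coefficient is $\hat K^*$ on one side and $\bar K^*$ on the other, so $m(t)$ is governed by a single rate only because the sign of $x^*(t)$ does not affect $(x^*(t))^2$, which needs a brief argument). A secondary technical point is justifying the interchange of limit and ODE in stage (ii) and the continuity of the argmin, which requires the strict positivity $\hat G^* DD' + R \succ 0$ — available from Assumption \ref{asmp_C} — to keep the inner minimisers in a compact set uniformly in the horizon.
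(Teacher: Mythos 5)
Your proposal is correct and follows essentially the same route as the paper: monotonicity of $v(\cA^T)$ in $T$ (truncation argument, $Q\succeq 0$) plus the assumed uniform bound gives convergence of $\hG(0;T)$ and $\bG(0;T)$; Lemma \ref{lem_G_homo} turns the backward ODEs (\ref{def_hatGt})--(\ref{def_barGt}) into forward autonomous ones whose monotone bounded solutions force the right-hand sides to vanish in the limit, yielding \eqref{def_equ_hG}--\eqref{def_equ_bG}; and stability follows by substituting $\hat{K}^*$, $\bar{K}^*$ back into these algebraic equations and using $Q\succ 0$ together with $\hG^*,\bG^*>0$, exactly as in the paper's computation of $\hat{N}(\hat{K}^*)$ and $\bar{N}(\bar{K}^*)$. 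Your stages (i) and (iii) actually go slightly beyond the paper's written proof, which neither establishes $\hG^*,\bG^*>0$ nor spells out an infinite-horizon verification argument, so these are welcome additions rather than deviations; the only point to tighten is that in stage (iv) the plugged-back identity gives a strictly negative right-hand side (not merely $\le 0$), since $Q\succ 0$ and the vector $(\hat{K}^{*\prime},1)^{\prime}$ is nonzero, which is what delivers genuine decay of $\E[(x^*(t))^2]$.
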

\begin{proof}
We first show that the optimal objective value of problem ($\cA^{T}$) is nondecreasing as the time horizon $T$ increases. We assume that the policy $\tilde{u}(\cdot)$ solves problem ($\cA^{T+\tau}$) and $\tilde{x}(\cdot)$ is the associated state trajectory, which leads to the following inequality,
\begin{align}
&v(\cA^{T+\tau})=\E \left[ \int_{0}^{T+\tau}
\left(\begin{array}{cc}
\tilde{u}(t)\\
\tilde{x}(t)
\end{array}\right)^{\prime}Q
\left(\begin{array}{cc}
\tilde{u}(t)\\
\tilde{x}(t)
\end{array}\right)dt \right] \notag\\
&~~\geq \E \left[ \int_{0}^{T}
\left(\begin{array}{cc}
\tilde{u}(t)\\
\tilde{x}(t)
\end{array}\right)^{\prime}Q
\left(\begin{array}{cc}
\tilde{u}(t)\\
\tilde{x}(t)
\end{array}\right)dt \right]\geq v(\cA^{T}), \label{thm_PLQinf_ineq}
\end{align}
where the last inequality is from the fact that the truncated policy $\tilde{u}(t)|_{t=0}^T$ is just a feasible policy of problem $(\cA^T)$. According to expressions (\ref{thm_ALQ_v0}) and (\ref{thm_PLQinf_ineq}), it has $\hat{G}(0;T+\tau)$ $\geq $ $\hat{G}(0;T)$ and $\bar{G}(0;T+\tau)$ $\geq $ $\bar{G}(0;T)$. Thus, it is not hard to see that both $\hat{G}(0;T)$ and $\bar{G}(0;T)$ are nondecreasing as $T$ increases. When there exists a uniformed upper bound $M$ such that $\hat{G}(0;T)<M$ and $\bar{G}(0;T)<M$ for any $T$, the nondecreasing sequences $\hat{G}(0;T)$ and $\bar{G}(0;T)$ will converge as $T$ goes to infinity. Next, we show that the limits $\hG^*$ and $\bG^*$ satisfy the equation (\ref{def_equ_hG}) and (\ref{def_equ_bG}). Introduce the following equations for $t\in[0,\infty)$:
\begin{align}
\dot{\hat{G}}^{\dag}(t)&= -\hat{G}^{\dag}(t)C^{\prime}C - 2\hat{G}^{\dag}(t)A - q - \min_{K \in \tK} \hat{f}\big(t,K,\hat{G}^{\dag}(t) \big), \label{thm_PLQ_inf_hGd}\\
\dot{\bar{G}}^{\dag}(t)&= -\bar{G}^{\dag}(t)C^{\prime}C - 2\bar{G}^{\dag}(t)A - q - \min_{K \in \tK} \bar{f}\big(t,K,\bar{G}^{\dag}(t)\big), \label{thm_PLQ_inf_bGd}
\end{align}
where $\hG^{\dag}(0)=0$ and $\bG^{\dag}(0)=0$ with $\hat{f}(t,K,\hat{G}^{\dag}(t))$ and $\bar{f}(t,K,\bar{G}^{\dag}(t))$ defined as follows,
\begin{align*}
\hat{f}(t,K,\hG^{\dag}(t))&=K^{\prime}\big( \hG^{\dag}(t)DD^{\prime}+ R \big)K + 2\big( \hG^{\dag}(t)C^{\prime}D^{\prime}+\hG^{\dag}(t) B^{\prime}+S^{\prime} \big) K, \\
\bar{f}(t,K,\bG^{\dag}(t))&=K^{\prime}\big( \bG^{\dag}(t)DD^{\prime}+ R \big)K - 2\big( \bG^{\dag}(t)C^{\prime}D^{\prime}+\bG^{\dag}(t) B^{\prime}+S^{\prime} \big) K.
\end{align*}
From Lemma \ref{lem_G_homo}, we have the following result for any $T>0$,
\begin{align*}
\hat{G}(t;T)=\hat{G}^{\dag}(T-t),~~\bar{G}(t;T)=\bar{G}^{\dag}(T-t),
\end{align*}
for $t\in[0,T]$. Taking $t=0$ in the above equations yields
\begin{align*}
\hat{G}(0;T)=\hat{G}^{\dag}(T),~~\bar{G}(0;T)=\bar{G}^{\dag}(T),
\end{align*}
for $T\geq0$. That is to say, studying the asymptotic properties of $\hG(0;T)$ and $\bG(0;T)$ with respect to $T$ is equivalent to study  $\hG^{\dag}(T)$ and $\bG^{\dag}(T)$ with respect to $T$. Since the limits of  $\hG(0;T)$ and $\bG(0;T)$ exists, it has $\lim_{t\rightarrow \infty}\dot{\hG}^{\dag}(t) $ $=$ $0$ and $\lim_{t\rightarrow \infty}\dot{\bG}^{\dag}(t) $ $=$ $0$. Thus, the equations (\ref{thm_PLQ_inf_hGd}) and (\ref{thm_PLQ_inf_bGd}) become (\ref{def_equ_hG}) and (\ref{def_equ_bG}), respectively.

Finally, we show that the closed-loop system (\ref{def_xt_inf}) is asymptotically stable under the stationary optimal policy $u^*(t)$. When $x(t)\geq 0$, applying the formulations $(9)$ in \cite{ChenZhou:2004} yields to the following expression with stationary optimal policy (\ref{thm_PLQ_ut_inf}),
\begin{align*}
dx(t)^2&=\Big[2A+C^{\prime}C+2(B^{\prime}+C^{\prime}D^{\prime})\hat{K}^*+(\hat{K}^*)^{\prime}DD^{\prime}\hat{K}^* \Big]\\
& \times x(t)^2dt + (2C^{\prime}+2(\hat{K}^*)^{\prime}D)x(t)^2dW(t).
\end{align*}
Taking the expectation of the above SDE gives rise to
\begin{align*}
\frac{d\E[x(t)^2]}{dt}=\hat{N}(\hat{K}^*)\E[x(t)^2]
\end{align*}
where
\begin{align*}
\hat{N}(K)=\big(2A+C^{\prime}C+2(B^{\prime}+C^{\prime}D^{\prime})K+(K)^{\prime}DD^{\prime}K \big).
\end{align*}
Comparing the above equation with (\ref{def_equ_hG}) yields to,
\begin{align*}
G^*\hat{N}(\hat{K}^*)&=-[q+(\hat{K}^*)^{\prime}R\hat{K}^*+2S^{\prime}\hat{K}^*]\\
&=-\left(\begin{array}{cc}
\hat{K}^*\\
1
\end{array}\right)^{\prime}Q
\left(\begin{array}{cc}
\hat{K}^*\\
1
\end{array}\right).
\end{align*}
From the facts that $Q$$\succ$$0$ and $G^*$$>$$0$, we obtain $\hat{N}(\hat{K}^*)<0$, which further  leads to
$$\lim_{t\rightarrow +\infty}\E[x^*(t)^2 \1_{\{x^*(t)\geq0\}}]=0.$$
Similarly, when $x(t)<0$, we can prove that $\bar{N}(\bar{K}^*)$ $<$ $0$ with
\begin{align*}
\bar{N}(K)&=\big(2A+C^{\prime}C-2(B^{\prime}+C^{\prime}D^{\prime})K+(K)^{\prime}DD^{\prime}K \big).
\end{align*}
Under this case, it has $\lim_{t\rightarrow +\infty}\E[x^*(t)^2 \1_{\{x^*(t)<0\}}]=0$. Overall,  it has $\lim_{t\rightarrow +\infty}$ $\E[x^*(t)^2]$ $=$ $0$, which completes the proof.
\end{proof}

In Theorem \ref{thm_PLQ_inf}, we may employ the numerical method to solve the equations  (\ref{def_equ_hG}) and (\ref{def_equ_bG}). Define the following functions associated with equation (\ref{def_equ_hG}) and (\ref{def_equ_bG}),
\begin{align}
\hat{F}(\hG)=& -\hG C^{\prime}C - 2\hG A - q - \min_{K \in \tK} \hat{f}(K, \hG),\label{def_hF}\\
\bar{F}(\bG)=& -\bG C^{\prime}C - 2\bG A - q - \min_{K \in \tK} \bar{f}(K, \bG).\label{def_bF}
\end{align}
Obviously, solving equations $\hat{F}(\hG)=0$ and $\bar{F}(\bG)=0$ provides the solution $\hG^*$ and $\bG^*$. We take equation (\ref{def_equ_hG}) as an example to illustrate the main idea of Newton's iterative method. Let $\hG^k>0$ be the solution in $k$th iteration and the gradient be $ \nabla \hat{F}(\hG^k)  $. Note that the following equation,
\begin{align*}
y-\hat{F}(\hG^k)= \nabla \hat{F}(\hG^k) (z -\hG^k ),
\end{align*}
describes the straight line passing through point $(\hG^k, \hat{F}(\hG^k))$ with the slop $\nabla \hat{F}(\hG^k)$ in space of $\{y,z\}$.\footnote{The gradient of $\hat{F}(\hG^k)$ can be approximated as, $\nabla \hat{F}(\hG^k)$ $\approx$
$\frac{\hat{F}(\hG^k+\epsilon) -\hat{F}(\hG^k)}{\epsilon}$ for some number $\epsilon$.} This equation is a linear approximation of the original equation $\hat{F}(\hG)=0$ at point $\hG^k$. Letting $y=0$  gives intersection of this line with $0$ and provides us the solution of the next iteration, $\hG^{k+1} = \hG^k - \hat{F}(\hG^k)/ \nabla \hat{F}(\hG^k)$. The solution of equation $\hat{F}(\hG)=0$ can be obtained by repeating this procedure until some criteria is satisfied. We summarize this procedure in Algorithm \ref{alg:hGbG}.
\begin{algorithm}[htb] \label{alg:hGbG}
\caption{Iterative solution method for (\ref{def_equ_hG})}
\label{alg:hGbG}
\begin{algorithmic}[1]
\Require
The parameters $A$, $B$, $C$, $D$, $R$, $S$ and $q$, the initial solutions $\hG^{\dagger}$, the parameter $\epsilon>0$ and the maximum number of the iteration, $I_{\max}$.
\Ensure
\State Let $k\leftarrow 0$,  $\hG^k\leftarrow \hG^{\dagger}$; \label{code:step 1}
\State Calculate $\hF(\hG^k)$;  \label{code:step 2}
\State If($|\hF(\hG^k)| < \epsilon$), stop, return the solution $\hG^k$; Otherwise go to next step;
\State Calculate $\nabla \hF(\hG^k)$ and
\begin{align*}
\hG^{k+1} = \hG^k - \hat{F}(\hG^k)/ \nabla \hat{F}(\hG^k)
\end{align*}
\State  $k\rightarrow k+1$, if $k>I_{\max}$, go to next step, otherwise, got to Step \ref{code:step 2};
\State Fail to find the solution of Equations (\ref{def_equ_hG}). \label{code:return_fail}
\end{algorithmic}
\end{algorithm}
The above procedure can be simply modified for equation (\ref{def_equ_bG}) by replacing $\hat{F}(\hG)$ by $\bar{F}(\bG)$.

\section{Application in Dynamic Mean-Variance Portfolio Selection}
\label{se_application_MV}
In this section, we use the solution method provided in Section \ref{se_sol_P} to solve the constrained dynamic MV portfolio selection problem. We consider a financial market with $n$ risky assets and one risk free asset. The price of the risk free asset is $S_0(t)$, which is governed by the following equation,
\begin{align*}
\begin{cases}
dS_0(t)=r(t)S_0(t)dt,~~t \in [0,T],\\
S_0(0)=s_0>0,
\end{cases}
\end{align*}
where $r(t)>0$ is the risk free return rate. The price of $n$ risky assets satisfy the SDE,
\begin{align*}
\begin{cases}
dS_i(t)=S_i(t)\{\mu_i(t)dt+\sum_{j=1}^{n}\sigma_{ij}(t)dW^j(t)\},\\
S_i(0)=s_i>0
\end{cases}
\end{align*}
where $\mu_i(t)>0$ and $\sigma_{ij}(t)$ are the appreciation rate and volatility coefficient, respectively. Let $\mu(t)$ $:=$ $\big(\mu_1(t),\cdots,\mu_n(t)\big)^{\prime}$ and
\begin{align*}
\sigma(t):=\left[
  \begin{array}{ccc}
    \sigma_{11}(t) & \cdots & \sigma_{1n}(t)\\
       \cdots      & \cdots &    \cdots\\
    \sigma_{n1}(t) & \cdots & \sigma_{nn}(t)
  \end{array}
\right],
\end{align*}
for all $t\in [0,T]$.  We assume that $r(t)$, $\mu(t)$ and $\sigma(t)$ are the deterministic functions of time and they are bounded for $t\in [0,T]$. We also assume the following  nondegeneracy condition holds for some constant $\delta>0$,
\begin{align*}
\sigma(t)\sigma(t)^{\prime} \geq  \delta \I, ~~t\in [0,T].
\end{align*}
An investor enters the market with initial wealth $x_0$ and allocates this wealth continuously during the invest horizon $[0,T]$. The total wealth at time $t$ is denoted as $x(t)$ and the portfolio decision vector is denoted as $ u(t)$$:=$$\big(u_1(t),\cdots,u_n(t)\big)^{\prime}$, which represents the allocations of the wealth in $n$ risky assets. The wealth level $x(t)$ evolves according to the following SDE (e.g., see Zhou $et~al.$ \cite{ZhouLi:2000}),
\begin{align}
\begin{dcases}
dx(t)=\Big( r(t)x(t)+b(t)^{\prime}u(t)\Big) dt+u(t)^{\prime}\sigma(t)dW(t),~~~~t\in[0,T],\\
x(0)=x_0>0,
\end{dcases}\label{MV_x}
\end{align}
where $b(t)$ $:=$ $\big( b_1(t),\cdots,b_n(t)\big)^{\prime}$ $=$ $\big(\mu_1(t)-r(t),\cdots,\mu_n(t)-r(t)\big)^{\prime}$ is the excess return rate vector. We also assume the following condition holds.
\begin{assumption} \label{asmp_b}
 There exists some $i^*\in \{1,\cdots,n\}$ such that  $b_{i^*}(t)>0$.
\end{assumption}
The Assumption \ref{asmp_b} says that the return rate of some risky assets should be greater than the return rate of the risk free asset, which is reasonable in real market.

Motivated by the restrictions in real investment, we consider the following constraint for the portfolio decision vector,
\begin{align}
H(t)u(t) \geq \0,~~t \in [0,T], \label{MV_u_constraint}
\end{align}
where $H(t)\in \bR^{k \times n}$ is the deterministic matrix for $t\in [0,T]$. The above constraint is known as the convex cone constraint, which includes various portfolio constraints as its special cases, e.g., the no-shorting constraint(\cite{LiZhouLim:2002}). In the following part, we use $\Var[x]$$:=$$\E[(x-\E[x])^2]$ to denote the variance of some random variable $x$. The investor adopts the following dynamic MV portfolio decision model to guide his investment,
\begin{align*}
(MV):~\min_{\{u(t)\}|_{t=0}^{T}}&~~\Var[x(T)]+\int_{0}^{T}\E[u(t)^{\prime}R(t)u(t)]dt\\
(s.t.)
&~~\E[x(T)]=d\\
&~~\{x(t),u(t)\}~~\textrm{satisfies (\ref{MV_x}) and (\ref{MV_u_constraint})}
\end{align*}
where $d$ is  expected terminal wealth level. Usually, we set $d>x_0e^{\int_{0}^{T}r(s)ds}$, i.e., the target expected terminal wealth should be greater than the one obtained by investing all into the risk free account. In model (MV), the penalty term $u(t)^{\prime}R(t)u(t)$ is used to controlled the risk exposure in the risky assets. It is worthwhile to mention that our model is more general than the one studied in the current literature, e.g., the work \cite{LiZhouLim:2002} only involves the no-shorting constraint \footnote{In \cite{LiZhouLim:2002}, some additional assumption is needed, e.g., they assume $\mu_i(t)>r(t)$ for all $i=1,\cdots, n$. This assumption is crucial to obtain their results. However, in our model, we can relax such assumption to Assumption \ref{asmp_b}.} and  \cite{HuZhou:2005} does not consider the penalty term $u(t)^{\prime}R(t)u(t)$ in the model.

To solve problem $(MV)$, we first reformulate it as a special case of the LQ control problem ($\cP_{\LQ}^T$). We utilize the embedding technique introduced by Li et al. \cite{LiNg:2000} to overcome difficulties of inseparability of variance term in the sense of dynamic programming. We consider the following auxiliary problem ($\widehat{MV}(\lambda)$) by introducing the Lagrange multiplier $\lambda \in \bR$ for constraint $\E[x(T)]=d$,
\begin{align*}
(\widehat{MV}(\lambda)):&~~\min_{\{u(t)\}|_{t=0}^{T}}~\E[(x(T)-\lambda)^2]+\int_{0}^{T}\E[u(t)^{\prime}R(t)u(t)]dt-(\lambda-d)^2\\
(s.t.)
&~~\{x(t),u(t)\}~~\textrm{satisfies (\ref{MV_x}) and (\ref{MV_u_constraint})}.
\end{align*}
We define the discount factor as
$$\rho(t):= e^{-\int_t^T r(s)ds}$$
for $t\in [0,T]$ and construct a new state variable $z(t)$ as $z(t):=x(t)-\lambda \rho(t)$. Replacing the state variable in $(\widehat{MV}(\lambda))$ yields the following problem,
\begin{align*}
(\overline{MV}(\lambda)):&\min_{\{u(t)\}|_{t=0}^{T}}\E[z(T)^2]+\int_{0}^{T}\E[u(t)^{\prime}R(t)u(t)]dt\\
(s.t.)
&~dz(t)=\big( r(t)z(t)+b(t)^{\prime}u(t)\big) dt\\
&~~~~~~~~~~~~~~+u(t)^{\prime}\sigma(t)dW(t).
\end{align*}
Obviously, this problem is a special case of problem $(\cP_{\LQ}^T)$ by setting $A(t)=r(t)$, $B(t)=b(t)$, $C(t)=0_{n\times 1}$,  $D(t)=\sigma(t)$, $q(t)=0$, $S(t)=0_{n\times 1}$, for $t\in[0,T)$ and $q_T=1$. Thus, similar to equations (\ref{def_hatGt}) and (\ref{def_barGt}), we introduce the following two ODEs for the two unknown functions $\hG^{mv}(\cdot): \bR \rightarrow \bR$ and $\bG^{mv}(\cdot): \bR \rightarrow \bR $,
\begin{align*}
\dot{\hG}^{mv}(t)&= - 2\hat{G}^{mv}(t)r(t) -\min_{ K:~H(t)K  \geq \0 } \hat{f}^{mv}(t,K,\hat{G}^{mv}(t)), \\
\dot{\bG}^{mv}(t)&= - 2\bar{G}^{mv}(t)r(t) - \min_{K:~ H(t)K \geq \0} \bar{f}^{mv}(t,K,\bar{G}^{mv}(t)),
\end{align*}
where $\hG^{mv}(T)$$=$$\bG^{mv}(T)=1$, respectively, and
\begin{align}
\hat{f}^{mv}(t,K, G)&= 2 G b(t)^{\prime} K + K^{\prime}\Big(G\sigma(t)\sigma(t)^{\prime}+R(t)\Big)K.\label{def_hf_mv}\\
\bar{f}^{mv}(t,K, G)&=-2 G b(t)^{\prime} K + K^{\prime}\Big(G\sigma(t)\sigma(t)^{\prime}+R(t)\Big)K. \label{def_bf_mv}
\end{align}
Furthermore, the functions $\hat{G}^{mv}(\cdot)$ and $\bar{G}^{mv}(\cdot)$ have the following properties.
\begin{lem}\label{lem_two_ineuqalities}
For any $t\in[0,T]$, the following inequalities hold true,
\begin{align}
&\hG^{mv}(t) \rho(t)^2 \leq 1,~~\label{MV_hG_convex}\\
&\bG^{mv}(t)\rho(t)^2 < 1.  \label{MV_bG_convex}
\end{align}
\end{lem}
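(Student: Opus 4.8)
The plan is to work with the \emph{discounted} quantities $\hat\psi(t):=\hG^{mv}(t)\rho(t)^2$ and $\bar\psi(t):=\bG^{mv}(t)\rho(t)^2$ and show they are monotone in $t$. First I would record that $\rho(T)=1$ and $\dot\rho(t)=r(t)\rho(t)$; differentiating and substituting the ODEs for $\hG^{mv}$ and $\bG^{mv}$ then gives
\[
\dot{\hat\psi}(t)=\rho(t)^2\big(\dot{\hG}^{mv}(t)+2r(t)\hG^{mv}(t)\big)=-\rho(t)^2\min_{K:\,H(t)K\geq\0}\hat f^{mv}\big(t,K,\hG^{mv}(t)\big),
\]
and likewise $\dot{\bar\psi}(t)=-\rho(t)^2\min_{K:\,H(t)K\geq\0}\bar f^{mv}(t,K,\bG^{mv}(t))$. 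Since $K=\0$ is feasible and $\hat f^{mv}(t,\0,\cdot)=\bar f^{mv}(t,\0,\cdot)=0$, both inner minima are $\leq 0$, hence $\dot{\hat\psi}\geq 0$ and $\dot{\bar\psi}\geq 0$ on $[0,T]$. Integrating over $[t,T]$ and using $\hat\psi(T)=\bar\psi(T)=\hG^{mv}(T)\rho(T)^2=1$ yields $\hat\psi(t)\leq 1$ and $\bar\psi(t)\leq 1$, which is (\ref{MV_hG_convex}) and the non-strict form of (\ref{MV_bG_convex}). Equivalently, one may phrase this as a backward ODE comparison with the trivial solution $t\mapsto\rho(t)^{-2}$ of $\dot g=-2rg$, $g(T)=1$.

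To upgrade (\ref{MV_bG_convex}) to a strict inequality on $[0,T)$, I would argue by contradiction: suppose $\bar\psi(t_0)=1$ for some $t_0<T$. Monotonicity of $\bar\psi$ together with $\bar\psi(T)=1$ forces $\bar\psi\equiv 1$ on $[t_0,T]$, so $\dot{\bar\psi}\equiv 0$ there, and therefore $\min_{K:\,H(t)K\geq\0}\bar f^{mv}(t,K,\bG^{mv}(t))=0$ for a.e.\ $t\in[t_0,T]$. But $\bG^{mv}(T)=1>0$, so by continuity $\bG^{mv}(t)>0$ on some $[T-\delta,T]\subset[t_0,T]$. On this interval I invoke Assumption \ref{asmp_b}, which provides a feasible direction, i.e.\ a vector $K$ with $H(t)K\geq\0$ and $b(t)'K>0$ (for the no-shorting constraint one simply takes $K=\e_{i^*}$; in general this is precisely the role played by Assumption \ref{asmp_b}). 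Since $\{K:H(t)K\geq\0\}$ is a cone, $\epsilon K$ is feasible for every $\epsilon>0$, and
\[
\bar f^{mv}\big(t,\epsilon K,\bG^{mv}(t)\big)=\epsilon\Big(-2\bG^{mv}(t)\,b(t)'K+\epsilon\,K'\big(\bG^{mv}(t)\sigma(t)\sigma(t)'+R(t)\big)K\Big)<0
\]
for all small $\epsilon>0$, contradicting that the minimum equals $0$. Hence $\bar\psi(t)<1$ for every $t\in[0,T)$, which establishes (\ref{MV_bG_convex}) (with equality at the endpoint $t=T$, so the strict inequality is meant on $[0,T)$).

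The monotone comparison is routine; the genuine work is in the strict inequality, where two points need care. The first is the positivity of $\bG^{mv}$: near $T$ it is immediate from the terminal value and continuity, but to be safe on all of $[0,T)$ one should note, exactly as in the proof of Lemma \ref{lem_nonnegative}, that $\bG^{mv}$ dominates the solution of the corresponding unconstrained Riccati equation, which is strictly positive. The second --- and the real crux --- is extracting from Assumption \ref{asmp_b} an honestly feasible descent direction for $\bar f^{mv}$, namely a $K$ in the cone $\{H(t)K\geq\0\}$ along which the excess return $b(t)'K$ is strictly positive; this is transparent for no-shorting and for the standard portfolio constraints, and it is the structural hypothesis that makes $\bG^{mv}(t)\rho(t)^2$ strictly below $1$ while $\hG^{mv}(t)\rho(t)^2$ is only $\leq 1$, the asymmetry coming from the $-2Gb'K$ versus $+2Gb'K$ term in $\bar f^{mv}$ versus $\hat f^{mv}$.
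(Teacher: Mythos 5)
Your proof is correct and takes essentially the same route as the paper's: use the feasible point $K=\0$ to bound the inner minimum by zero, compare with the solution of $\dot g=-2rg$, $g(T)=1$ (the paper integrates $d\hG^{mv}/\hG^{mv}\geq -2r\,dt$ directly, you phrase it via the monotone discounted quantity $G(t)\rho(t)^2$), and obtain strictness for $\bG^{mv}$ by contradiction from the strict negativity of $\min_{H(t)K\geq\0}\bar f^{mv}$ under Assumption \ref{asmp_b}. Your handling of the equality case --- monotonicity forcing the minimum to vanish on an interval, then exhibiting a feasible direction with $b(t)^{\prime}K>0$ along which $\bar f^{mv}<0$ --- is, if anything, slightly more explicit than the paper's terse argument.
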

\begin{proof} We first consider $\hG^{mv}(t)$. Since $K(t)=\0$ is a feasible solution of $\hat{f}^{mv}(t,K(t),\hat{G}^{mv}(t))$, we obtain,
\begin{align*}
\min_{K(t): H(t)K(t) \geq 0}&\hat{f}^{mv}(t,K(t),\hat{G}^{mv}(t)) \\
\leq&\hat{f}^{mv}(t,0,\hat{G}^{mv}(t))=0.
\end{align*}
Thus, it has $\dot{\hat{G}}^{mv}(t)$$\geq $$- 2\hat{G}^{mv}(t)r(t)$ which leads to the following inequality,
\begin{align*}
&\int_t^T \frac{1}{\hat{G}^{mv}(t)}d \hat{G}^{mv}(t) \geq - \int_t^T 2 r(s) ds\\
\Rightarrow &~~1-\hat{G}^{mv}(t)e^{-2\int_{t}^{T}r(s)ds} \geq 0,
\end{align*}
Similarly, we can prove that  $1-\bar{G}^{mv}(t)\rho(t)^2\geq 0$. Now, we show that
\begin{align*}
1-\bar{G}^{mv}(t) \rho(t)^{2}\not =0.
\end{align*}
If $1-\bar{G}^{mv}(t)\rho(t)^2$$=$$0$, it has,
\begin{align*}
\min_{H(t)K(t) \geq \0} \bar{f}^{mv}(t,K(t),\bar{G}^{mv}(t))=0,
\end{align*}
which is contradict to the fact
$$\min_{H(t)K(t) \geq \0} \bar{f}^{mv}(t,K(t),\bar{G}^{mv}(t))<0$$
according to expression (\ref{def_bf_mv}) and under the Assumption \ref{asmp_b}.
\end{proof}

Note that, in Lemma \ref{lem_two_ineuqalities}, the strict inequality only holds for (\ref{MV_bG_convex}). Such a result plays important role in the following part.

The following result characterizes the solution of problem (MV).
\begin{theorem} \label{thm_sol_MV} The associated optimal portfolio policy of problem $(MV)$ is given by,
\begin{align}
u^*(t)=
\begin{dcases}
\hat{K}^{mv}(t) (x(t)-\lambda^* \rho(t)) & \textrm{if}~x(t)-\lambda^* \rho(t) \geq 0,\\
-\bar{K}^{mv}(t)(x(t)-\lambda^* \rho(t)) & \textrm{if}~x(t)-\lambda^* \rho(t)  < 0,\\
\end{dcases} \label{mv_opt_u}
\end{align}
where
\begin{align}
\hat{K}^{mv}(t)&=\arg \min_{K:H(t)K \geq 0} \hat{f}^{mv}(t,K,\hat{G}^{mv}(t)), \label{def_hK_MV}\\
\bar{K}^{mv}(t)&=\arg \min_{K:H(t)K \geq 0} \bar{f}^{mv}(t,K,\bar{G}^{mv}(t)).\label{def_bK_MV}
\end{align}
Moreover, the optimal Lagrange multiplier $\lambda^*$ is,
\begin{align}
\lambda^*=\frac{d-x_0\bar{G}^{mv}(0)\rho(0)}{1-\bar{G}^{mv}(0)\rho(0)^2}. \label{opt_lagrange_MV}
\end{align}
\end{theorem}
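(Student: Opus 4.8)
The plan is to reduce $(MV)$ to the already-solved finite-horizon problem $(\cP_{\LQ}^{T})$ through the Lagrangian embedding of \cite{LiNg:2000}, solve the embedded problem by Theorem \ref{thm_PLQ}, and then pin down the optimal multiplier $\lambda^{*}$ by a one-variable optimization. First I would record that, by the embedding argument, $v(MV)=\sup_{\lambda\in\bR}v(\widehat{MV}(\lambda))$ with the supremum attained at some $\lambda^{*}$ and the optimal control of $(MV)$ coinciding with that of $(\widehat{MV}(\lambda^{*}))$: the bound $v(\widehat{MV}(\lambda))\le v(MV)$ is elementary, while the reverse is the absence of a duality gap for the convex program $(MV)$ with its single affine constraint $\E[x(T)]=d$ (the optimality condition in $\lambda$ is exactly $\E[x^{*}_{\lambda}(T)]=d$, so the embedded optimum is $(MV)$-feasible at $\lambda^{*}$). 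Since $z(t)=x(t)-\lambda\rho(t)$ is a deterministic affine shift with $z(0)=x_{0}-\lambda\rho(0)$ and $\dot\rho(t)=r(t)\rho(t)$, one gets $v(\widehat{MV}(\lambda))=v(\overline{MV}(\lambda))-(\lambda-d)^{2}$.

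Next I would observe that $(\overline{MV}(\lambda))$ is precisely the instance of $(\cP_{\LQ}^{T})$ with $A=r$, $B=b$, $C=\0$, $D=\sigma$, $q=0$, $S=\0$, $q_{T}=1$ and cone constraint $H(t)u\ge\0$, and that Assumption \ref{asmp_psd} holds here since $Q(t)=\mathrm{diag}(R(t),0)\succeq0$ and $\sigma(t)\sigma(t)^{\prime}\ge\delta\I\succ0$. Hence Theorem \ref{thm_PLQ} applies verbatim: $\hat G^{mv},\bar G^{mv}$ are the specializations of (\ref{def_hatGt})--(\ref{def_barGt}), the optimal feedback for $(\overline{MV}(\lambda))$ is $\hat K^{mv}(t)z(t)$ on $\{z(t)\ge0\}$ and $-\bar K^{mv}(t)z(t)$ on $\{z(t)<0\}$ with $\hat K^{mv},\bar K^{mv}$ from (\ref{def_hK_MV})--(\ref{def_bK_MV}), and $v(\overline{MV}(\lambda))=z(0)^{2}\big(\hat G^{mv}(0)\1_{\{z(0)\ge0\}}+\bar G^{mv}(0)\1_{\{z(0)<0\}}\big)$. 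Combining, $v(\widehat{MV}(\lambda))=\phi(\lambda):=(x_{0}-\lambda\rho(0))^{2}g(\lambda)-(\lambda-d)^{2}$, where $g(\lambda)=\hat G^{mv}(0)$ for $\lambda\le x_{0}/\rho(0)$ (i.e. $z(0)\ge0$) and $g(\lambda)=\bar G^{mv}(0)$ for $\lambda> x_{0}/\rho(0)$, the threshold being $x_{0}/\rho(0)=x_{0}e^{\int_{0}^{T}r(s)ds}$.

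The heart of the proof is the maximization of the continuous, piecewise-quadratic function $\phi$. On $\lambda\le x_{0}/\rho(0)$ I would use $\hat G^{mv}(0)\rho(0)^{2}\le1$ from (\ref{MV_hG_convex}) to obtain $\phi''\le0$ there, compute $\phi'(x_{0}/\rho(0))=2\big(d-x_{0}/\rho(0)\big)>0$ from the standing hypothesis $d>x_{0}e^{\int_{0}^{T}r(s)ds}$, and conclude that $\phi$ is increasing on that whole branch, so its supremum there is the junction value. On $\lambda> x_{0}/\rho(0)$, $\phi$ is a quadratic with leading coefficient $\rho(0)^{2}\bar G^{mv}(0)-1$, which is \emph{strictly} negative by (\ref{MV_bG_convex}) --- this is exactly where the strict part of Lemma \ref{lem_two_ineuqalities} is indispensable --- so $\phi$ is strictly concave there with a unique stationary point; solving $\phi'(\lambda)=0$ gives $\lambda^{*}=\dfrac{d-x_{0}\bar G^{mv}(0)\rho(0)}{1-\bar G^{mv}(0)\rho(0)^{2}}$, and multiplying out $\lambda^{*}>x_{0}/\rho(0)$ reduces to $d\rho(0)>x_{0}$, again the hypothesis on $d$. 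Thus $\lambda^{*}$ lies in the interior of the second branch and maximizes it, while $\phi'(x_{0}/\rho(0)^{+})>0$ shows $\phi(\lambda^{*})$ beats the left-branch supremum; hence $\lambda^{*}$ as in (\ref{opt_lagrange_MV}) is the global maximizer.

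Finally, having fixed $\lambda^{*}$, substituting it into the feedback law of Theorem \ref{thm_PLQ} with $z(t)=x(t)-\lambda^{*}\rho(t)$ yields (\ref{mv_opt_u}). The step I expect to be the main obstacle is the maximization over $\lambda$: one must (i) justify that $v(MV)$ equals $\sup_{\lambda}v(\widehat{MV}(\lambda))$ and that the embedded optimum at $\lambda^{*}$ is $(MV)$-feasible (the convex-duality/envelope argument), and (ii) handle the two-branch structure of $\phi$, in particular showing that the second branch's stationary point $\lambda^{*}$ actually falls inside that branch and that nothing on the first branch beats it. Both parts of (ii) hinge on the inequalities of Lemma \ref{lem_two_ineuqalities} (the strict one ruling out a degenerate, unbounded $\phi$) and on $d>x_{0}e^{\int_{0}^{T}r(s)ds}$, so neither hypothesis can be dispensed with.
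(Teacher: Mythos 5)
Your proposal is correct and follows essentially the same route as the paper: embed $(MV)$ into $(\widehat{MV}(\lambda))$, pass to $(\overline{MV}(\lambda))$ as a special case of $(\cP_{\LQ}^T)$, apply Theorem \ref{thm_PLQ}, and then maximize the resulting piecewise-quadratic value over $\lambda$ using the inequalities of Lemma \ref{lem_two_ineuqalities} to land on the $\bar{G}^{mv}$-branch and obtain (\ref{opt_lagrange_MV}). You simply spell out the branch analysis (junction derivative, strict concavity on the right branch, $\lambda^*>x_0/\rho(0)$ from $d>x_0e^{\int_0^T r(s)ds}$) and the no-duality-gap step that the paper leaves implicit in its appeal to the Li--Ng embedding and its one-line ``piece-wise concave'' argument.
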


\begin{proof}
Applying Theorem \ref{thm_PLQ}, we obtain the optimal policy of problem $(\overline{MV}(\lambda))$ for any fix $\lambda$,
\begin{align}
u^*(t)=
\begin{dcases}
 \hK^{mv}(t) z(t)~~\textrm{if}~~z(t) \geq 0,\\
-\bK^{mv}(t) z(t)~~\textrm{if}~~z(t) < 0.
\end{dcases}\label{mv_opt_u-1}
\end{align}
The remaining task is to identify the optimal Lagrangian multiplier $\lambda^*$. From Theorem \ref{thm_PLQ}, it not hard to derive the optimal value of problem $(\overline{MV}(\lambda))$ for any fix $\lambda$ as,
\begin{align*}
v(\overline{MV}(\lambda))=
\begin{dcases}
\hat{G}^{mv}(0)( x_0-\lambda \rho(0) )^2-(\lambda-d)^2 \\
~~~~~~~~~~\textrm{if}~~x(0)-\lambda \rho(0) \geq 0, \\
\bar{G}^{mv}(0)( x_0-\lambda \rho(0) )^2-(\lambda-d)^2 \\
~~~~~~~~~~\textrm{if}~~x(0)-\lambda \rho(0)  < 0,
\end{dcases}
\end{align*}
Then, the optimal Lagrangian multiplier $\lambda^*$ can be identified by maximizing $\lambda^*=\max_{\lambda\in\bR^n} v(\overline{MV}(\lambda))$. According to (\ref{MV_hG_convex}) and (\ref{MV_bG_convex}), we find that $v(\overline{MV}(\lambda))$ is a piece-wise concave function with respect to $\lambda$. Therefore, the optimal Lagrangian multiplier $\lambda^*$ can be derived as (\ref{opt_lagrange_MV}). The optimal portfolio decision (\ref{mv_opt_u}) is achieved by replacing $z(t)$ with $x(t)$ in (\ref{mv_opt_u-1}).
\end{proof}

Usually, the investor is interested in the MV efficient frontier, i.e., the Parato optimal set of expected terminal wealth $d$ and the associated minimum variance $\Var[x(T)]$.  Substituting $\lambda^*$ back to  $v(\widehat{MV}(\lambda))$ yields the semi-analytical expression of the MV efficient frontier as follows,
\begin{align*}
\Var[x^*&(T)]= v(\widehat{MV}(\lambda^*)-\int_{0}^{T}\E[(u^*(t))^{\prime}R(t)u^*(t)]dt \\
=& \frac{\bG^{mv}_0\rho(0)^2}{ 1 -\bG^{mv}_0\rho(0)^2}\big( \E[x^*(T)] -x_0\rho(0) \big)^2 -\int_{0}^{T}\E[(u^*(t))^{\prime}R(t)u^*(t)]dt,
\end{align*}
with $\E[x^*(T)]\geq x_0 \rho(0)^{-1}$. Note that the second term can be evaluated by the Monte Carlo simulation method after computing all $\bK^{mv}(t)$ and $\hK^{mv}(t)$.

\section{Illustrative Examples} \label{se_example}
In this section, we provide examples to illustrate the solution schemes developed in previous sections for problems ($\cP_{\LQ}^{T}$), ($\cP_{\LQ}^{\infty}$) and (MV).

\begin{example} \label{exam_LQ}
We consider an example of problem ($\cP_{\LQ}^{T}$) whose parameters are set as same as the one given in \cite{ChenZhou:2004}, i.e., the system parameters are,\footnote{We normalize the system matrices by $0.1$.}$A(t)$$=$$0.2$, $B(t)$$=$$(-5,-10,20)^{\prime}$, $C(t)$$=$$(-0.84,-3.78,0.849)^{\prime}$, and
$D(t)$$=$$\left[
  \begin{array}{ccc}
    6.85   &   11.22   &   -1.98\\
   -8.78   &   13.24   &   -5.44\\
    0.68   &   14.53   &   -2.32
  \end{array}
\right]$, for $t\in [0,T]$. The cost matrices are
\begin{align*}
R(t)=\left[
  \begin{array}{ccc}
   3.0  &   0   &   0 \\
    0   &  5.0  &   0 \\
    0   &   0   &  4.0
  \end{array}
\right],~~
S(t)=\left[
  \begin{array}{ccc}
    0.1 \\
    0.4 \\
    0.5
  \end{array}
\right]
\end{align*}
and $q(t)$ $=$ $10$ for $t$ $\in$ $[0,T)$ with $q_T$ $=$ $0$. The control horizon is $T$ $=$ $0.1$. We consider the following upper and lower bounded control constraint,
\begin{align*}
\underline{d}(t)|x(t)| \leq u(t) \leq \bar{d}(t)|x(t)|
\end{align*}
with $\underline{d}(t)$ $=$ $\left[-0.2,-0.2, -0.2\right]^{\prime}$ and $
\bar{d}(t)$ $=$ $\left[0.5, 0.5, 0.5 \right]^{\prime}$ for $t\in [0,T]$. Using the Theorem \ref{thm_PLQ}, we can obtain the optimal control for problem $(\cP_{\LQ}^{T})$ as $u^*(t)$ $=$ $\hat{K}^*(t)x(t)\1_{\{x(t)\geq 0\}}$ $-$ $\bar{K}^*(t)x(t) \1_{\{x(t)<0\}}$ where $\hat{K}^*(t)$ and $\bar{K}^*(t)$ are given in Figure \ref{exam_LQ_hbK}. Furthermore, the value function at time $t$ is $v(t,x(t))$ $=$ $\hat{G}(t)x(t)^2\1_{\{x(t)\geq 0\}}$ $+$ $\bar{G}(t)x(t)^2 \1_{\{x(t)<0\}}$, where $\hG(t)$ and $\bG(t)$ are given in Figure \ref{exam_LQ_hbG}.

We then extend the control horizon to $T=\infty$ and consider control problem ($\cP_{LQ}^{\infty}$). Based on Theorem \ref{thm_PLQ_inf} and Algorithm \ref{alg:hGbG}, we can identify $\hG^*=0.7191$ and $\bG^*=1.2032$. The associated stationary optimal control is $u^{\infty}(t)$ $=$ $\hat{K}^*x(t)\1_{\{ x(t)\geq 0 \} }$ $+$ $\bar{K}^*x(t) \1_{\{x(t)<0\} }$, for all $t\in[0,\infty)$, where $\hat{K}^*$ and $\bar{K}^*$ are given by,
\begin{align*}
\hat{K}^*=\left[
  \begin{array}{ccc}
     0.3815\\
     0.2032 \\
   - 0.2000 \\
  \end{array}
\right]
~~\textrm{and}~~
\bar{K}^*=\left[
  \begin{array}{ccc}
   -0.2000\\
   -0.2000\\
    0.1689
  \end{array}
\right].
\end{align*}
Note that whether the equations (\ref{def_equ_hG}) and (\ref{def_equ_bG}) admit the solution depends on the system parameters and the constraints. Figure \ref{exam_LQ_hFbF} plots the functions $\hat{F}(\hG)$ and $\bar{F}(\bG)$ defined in (\ref{def_hF}) and (\ref{def_bF}), respectively, for Example \ref{exam_LQ}. Based on the above parameters, we find that equations $\hat{F}(\hG)=0$ and $\bar{F}(\bG)=0$ admit the solutions, $\hG^*=0.7191$ and $\bG^*=1.2032$.(e.g., the line marked with circle) However, if we change the constraints to $\underline{d}(t)$ $=$ $\left[0.2,0.2, 0.2\right]^{\prime}$ and $
\bar{d}(t)$ $=$ $\left[0.8, 0.8, 0.8 \right]^{\prime}$ for $t\in [0,T]$, both functions  $\hat{F}(\hG)$ and $\bar{F}(\bG)$ do not admit solutions, which is plotted by `$*$' in Figure \ref{exam_LQ_hFbF}.

\begin{figure}
  \centering
  \includegraphics[width=260pt]{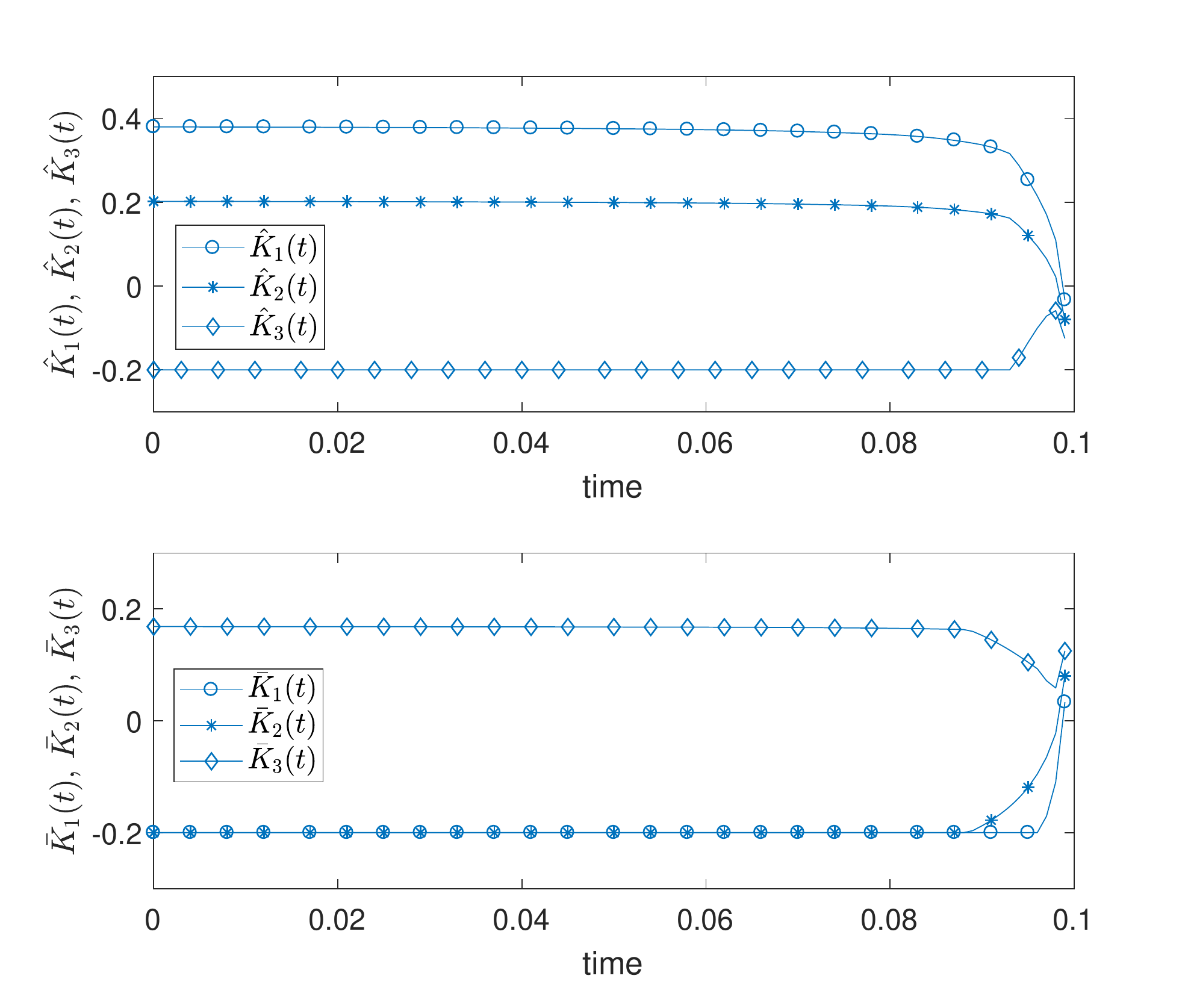}\\
  \caption{Feedback gains $\hat{K}^*(t)$ and $\bar{K}^*(t)$} \label{exam_LQ_hbK}
\end{figure}

\begin{figure}
  \centering
  \includegraphics[width=260pt]{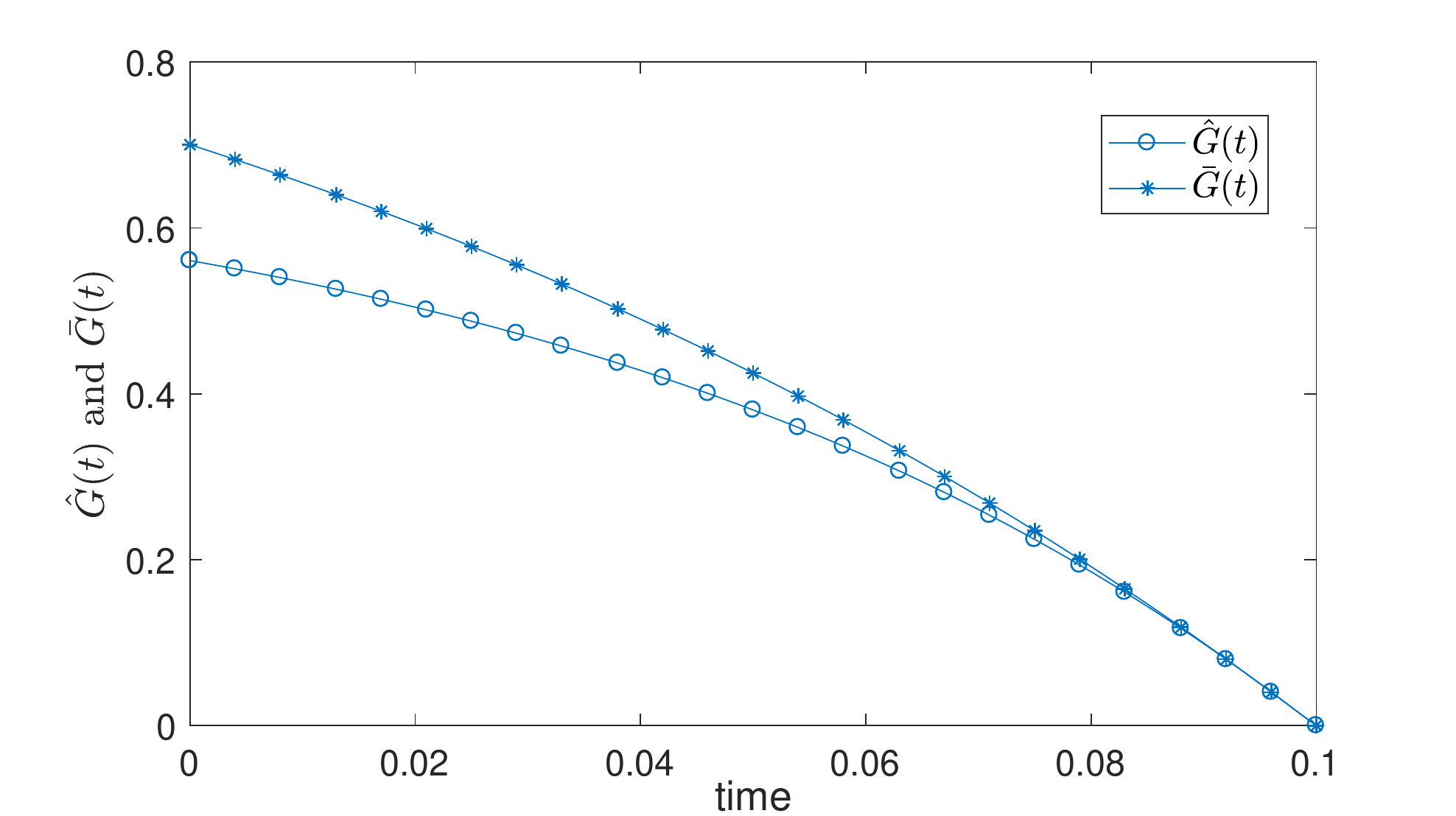}\\
  \caption{Solutions of $\hat{G}(t)$ and $\bar{G}(t)$} \label{exam_LQ_hbG}
\end{figure}

\begin{figure}
  \centering
  \includegraphics[width=260pt]{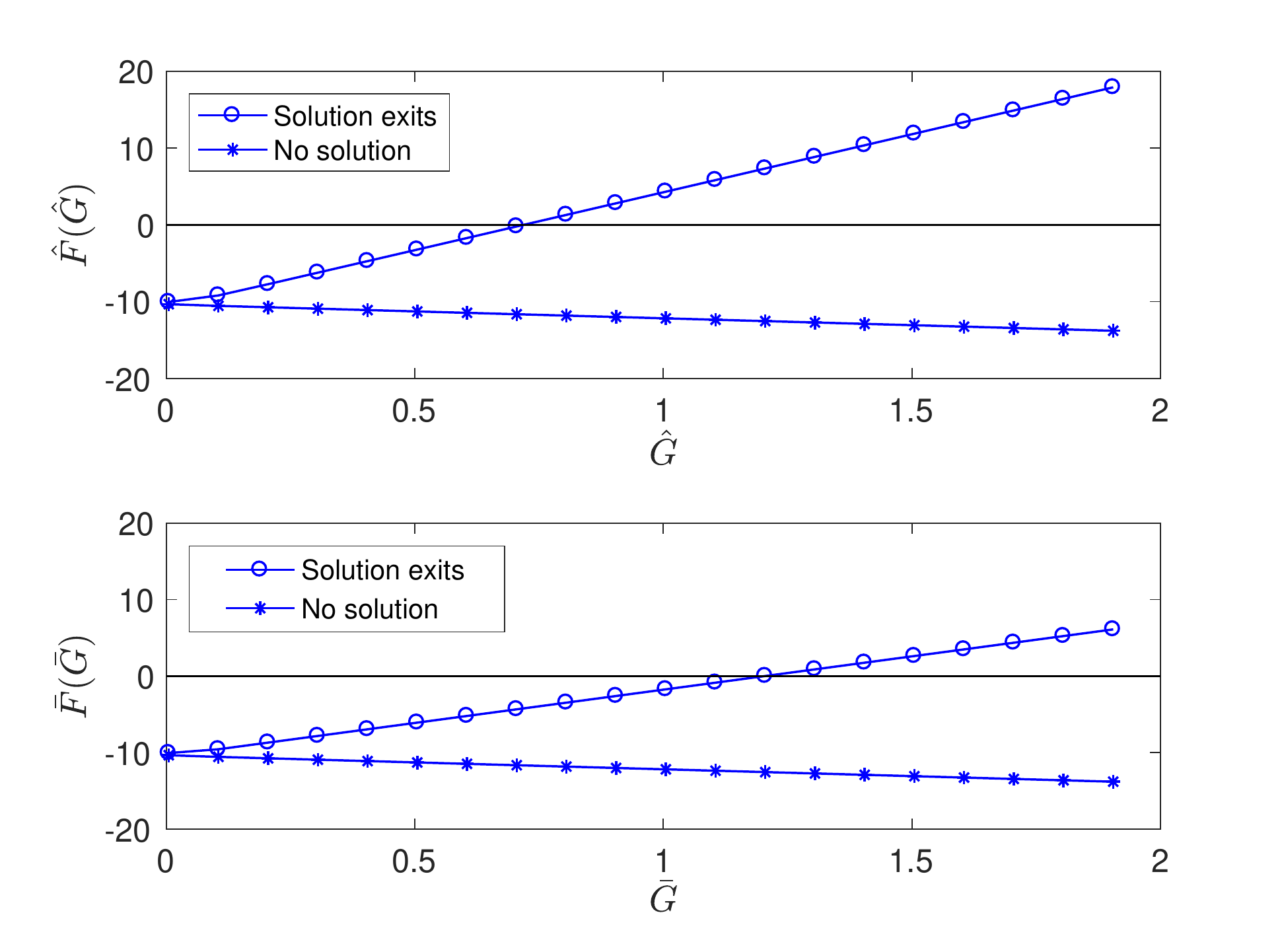}\\{}
  \caption{ Functions $\hat{F}(\hG)$ and $\bar{F}(\bG)$} \label{exam_LQ_hFbF}
\end{figure}

\end{example}

\begin{example}\label{exam_mv}
We consider an example of the MV portfolio selection model studied in Section \ref{se_application_MV}. We use the financial indices of six industrial sectors in U.S stock market as the risky assets, namely, the sectors of  Toy $\&$ Recreation, Communication, Ship Building, Coal, Gold, and Industrial Mining. \footnote{The detail instruction of these industrial indices and historical data can be found from
\url{http://mba.tuck.dartmouth.edu/pages/faculty/ken.french/Data_Library/det_48_ind_port.html}.} Based on the historical data of monthly return from January 2008 to December 2013, we can estimate the parameters of average return rate and volatility, e.g., we set
 $\mu(t)$ $=$ $\bar{\mu}$ and $\sigma(t)$ $=$ $\bar{\sigma}$ for all $t\in [0,T]$ with
\begin{align*}
\bar{\mu}&=\left(
       \begin{array}{cccccc}
         0.0321 & 0.0123 & 0.0217 & 0.0217 & 0.0282 & 0.0146
       \end{array}
     \right)~\textrm{and}\\
\bar{\sigma}&=\left(
       \begin{array}{cccccc}
       0.0845  &  0.0062  &  0.0166  &  0.0069  &  0.0087  &  0.0068 \\
       0.0062  &  0.0327  &  0.0148  &  0.0124  &  0.0067  &  0.0054 \\
       0.0166  &  0.0148  &  0.0589  &  0.0271  &  0.0204  &  0.0088 \\
       0.0069  &  0.0124  &  0.0271  &  0.1015  &  0.0215  &  0.0501 \\
       0.0087  &  0.0067  &  0.0204  &  0.0215  &  0.0609  &  0.0120 \\
       0.0068  &  0.0054  &  0.0088  &  0.0501  &  0.0120  &  0.1313
       \end{array}
     \right)。
\end{align*}
Different from the MV portfolio model studied in \cite{LiZhouLim:2002}, in which the no-shorting constraints are imposed in all risky assets, our model allows the no-shorting constraints  be only imposed on some of the assets, i.e., we set no-short constraints as, $u_2(t)\geq 0$, $u_3(t)\geq 0$, $u_4(t)\geq 0$ and $u_6(t)\geq 0$ for $t\in [0,T]$. The remained assets have no constraints, i.e., $u_1(t)$ and $u_5(t)$ are free of constraints. These constraints can be easily represented in matrix formulation $H(t) u(t)\leq \0$, i.e., by setting the diagonal elements of $H(t)$ to be $-1$ except the first and fifth elements and setting all the other elements to be $0$. We also assume that there is a risk free asset with the monthly return rate being $0.25\%$. The investor enters market with an initial wealth $x_0$ $=$ $100$ and hopes for an expected terminal wealth level $d$ $=$ $130$ with the investment horizon being $T$ $=$ $12$ months. We also include a penalty matrix on portfolio as $R(t) = 10^{-2}\I$ for all $t\in[0,T]$.

Based on Theorem \ref{thm_sol_MV}, we can compute the optimal Lagrangian multiplier as $\lambda^*=189.78$ and the optimal portfolio policy is,
\begin{align*}
u^*(t)=
\begin{cases}
\hat{K}^{mv}(t) \Big(x(t)-189.78\times e^{0.0025(t-T)}\Big)\\
~~~~~~~~~~~~~~~\textrm{if}~~x(t)-189.78\times e^{0.0025(t-T)} \geq 0,\\
-\bar{K}^{mv}(t)\Big(x(t)-189.78\times e^{0.0025(t-T)}\Big)\\
~~~~~~~~~~~~~~~\textrm{if}~~x(t)-189.78\times e^{0.0025(t-T)} < 0.\\
\end{cases}
\end{align*}
where $\hK^{mv}(t)$ and $\bK^{mv}(t)$ are plotted in Figure \ref{fig:mv_hKbK} for $t\in [0,T]$.
\begin{figure}
  \centering
  \includegraphics[width=250pt]{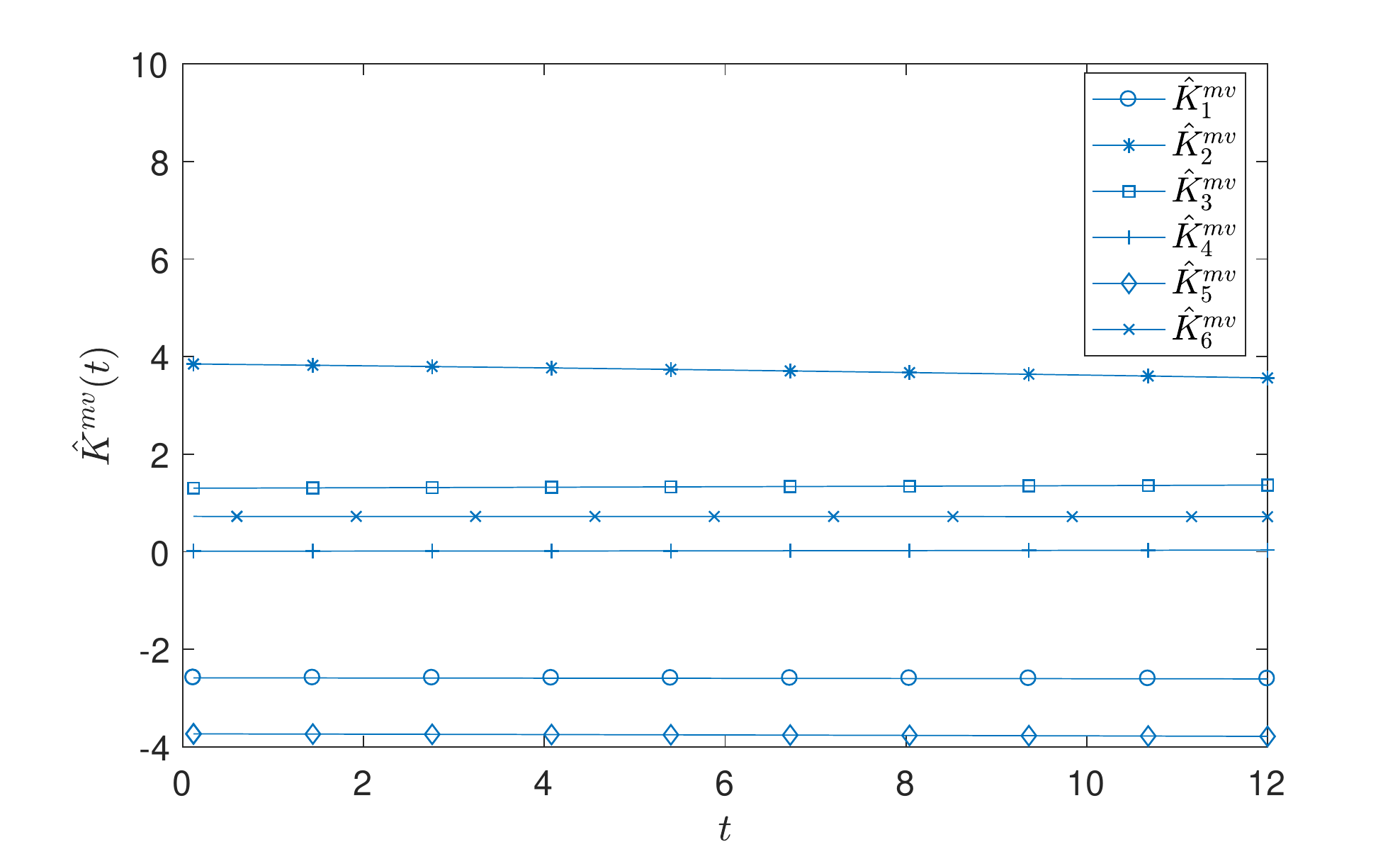}\\
  \includegraphics[width=250pt]{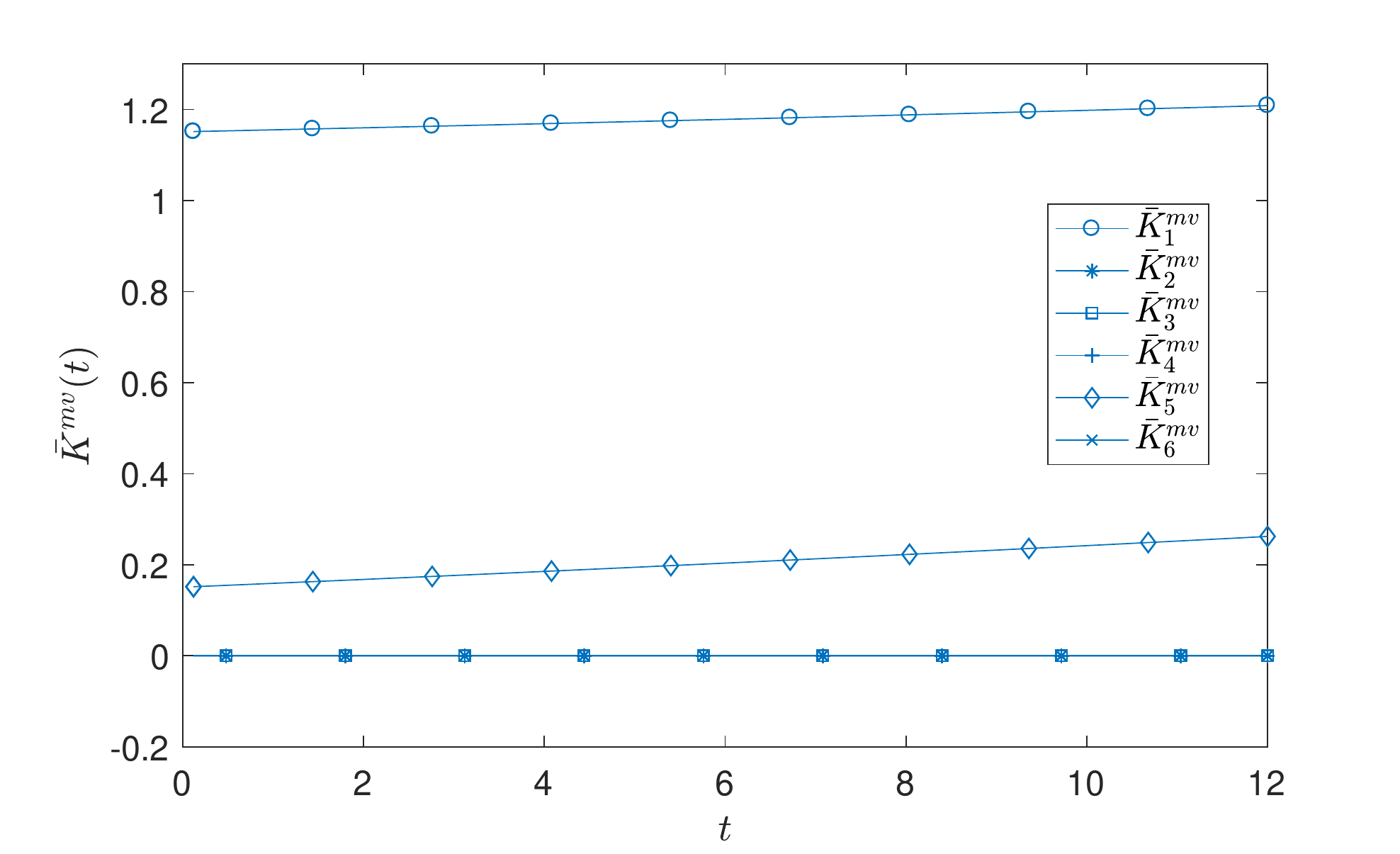}
  \caption{The outputs of $\hat{K}^{mv}(t)$ and $\bar{K}^{mv}(t)$} \label{fig:mv_hKbK}
\end{figure}

We use the buy-and-hold portfolio policy as a benchmark, i.e., we treat the total $T=12$ month as the one period static MV portfolio selection model. All the constraints and parameters are set as same as our dynamic MV portfolio selection model $(\MV)$. Solving such a problem provides us the buy-and-hold portfolio policy. Figure \ref{fig:mv_wealth} plots the realizations of the wealth processes generated by our model $(\MV)$ and the static benchmark for the identical sample path of the price process. Figure \ref{fig:mv_wealth} shows that our model performs better than the static benchmark model. To compare the performance of these two models, we plot the mean-variance efficient fronter of the terminal wealth $x(T)$ for two models in Figure \ref{fig:mv_eff}. The efficient frontier describes the Parato optimal set of the mean and standard deviation of the wealth. \footnote{In this example, the efficient frontier is plotted in the following way. We vary the expected terminal wealth level $d$ and compute the correspondent optimal portfolio. Once we have the portfolio, we can compute the standard deviation of the terminal wealth.} Figure \ref{fig:mv_eff} shows that our dynamic MV portfolio policy achieves lower level of risk than the static MV portfolio policy for the same level of expected terminal wealth.

\begin{figure}
\centering
  \includegraphics[width=250pt]{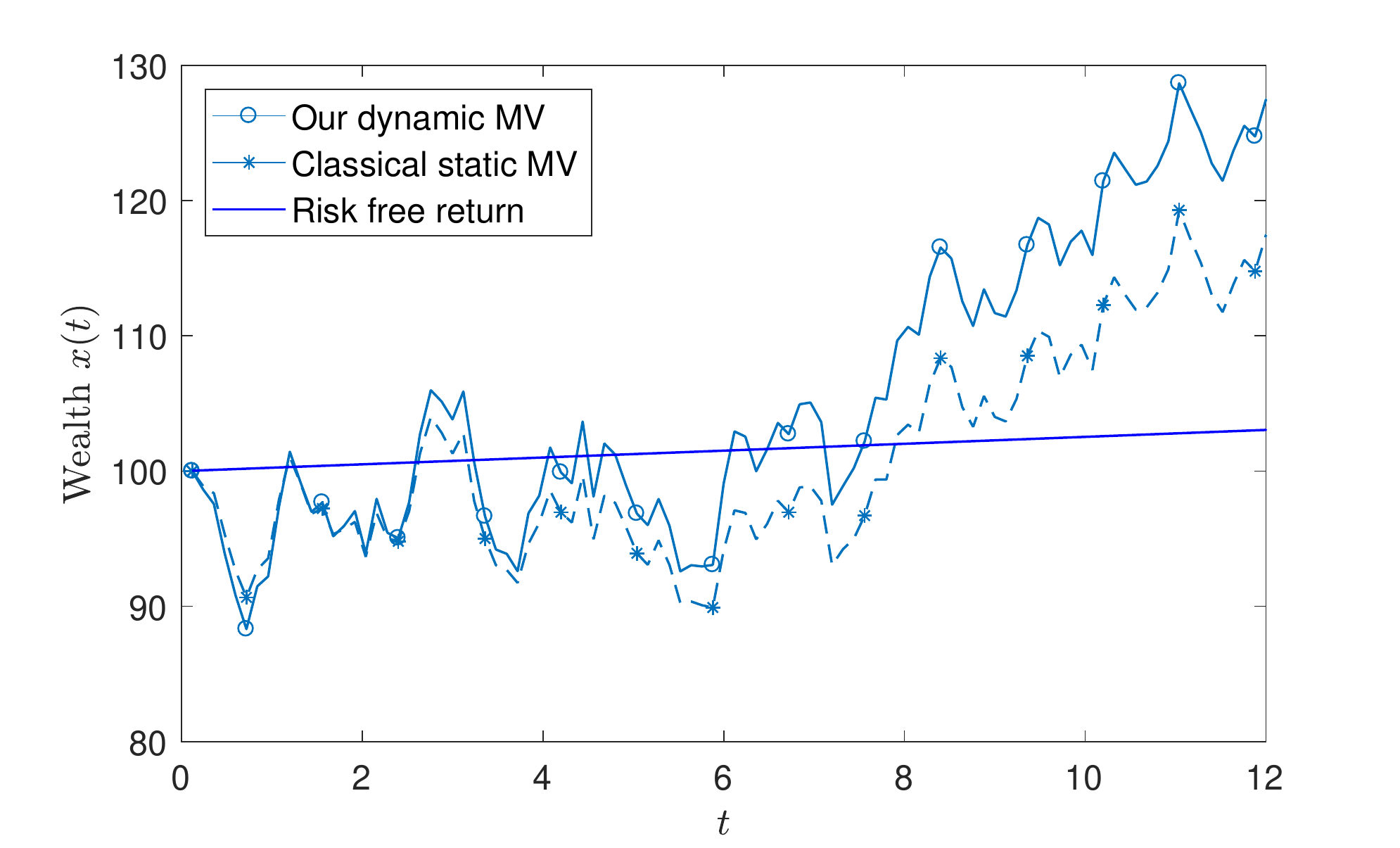}\\
    \caption{The realized wealth processes} \label{fig:mv_wealth}
\end{figure}

\begin{figure}
\centering
  \includegraphics[width=250pt]{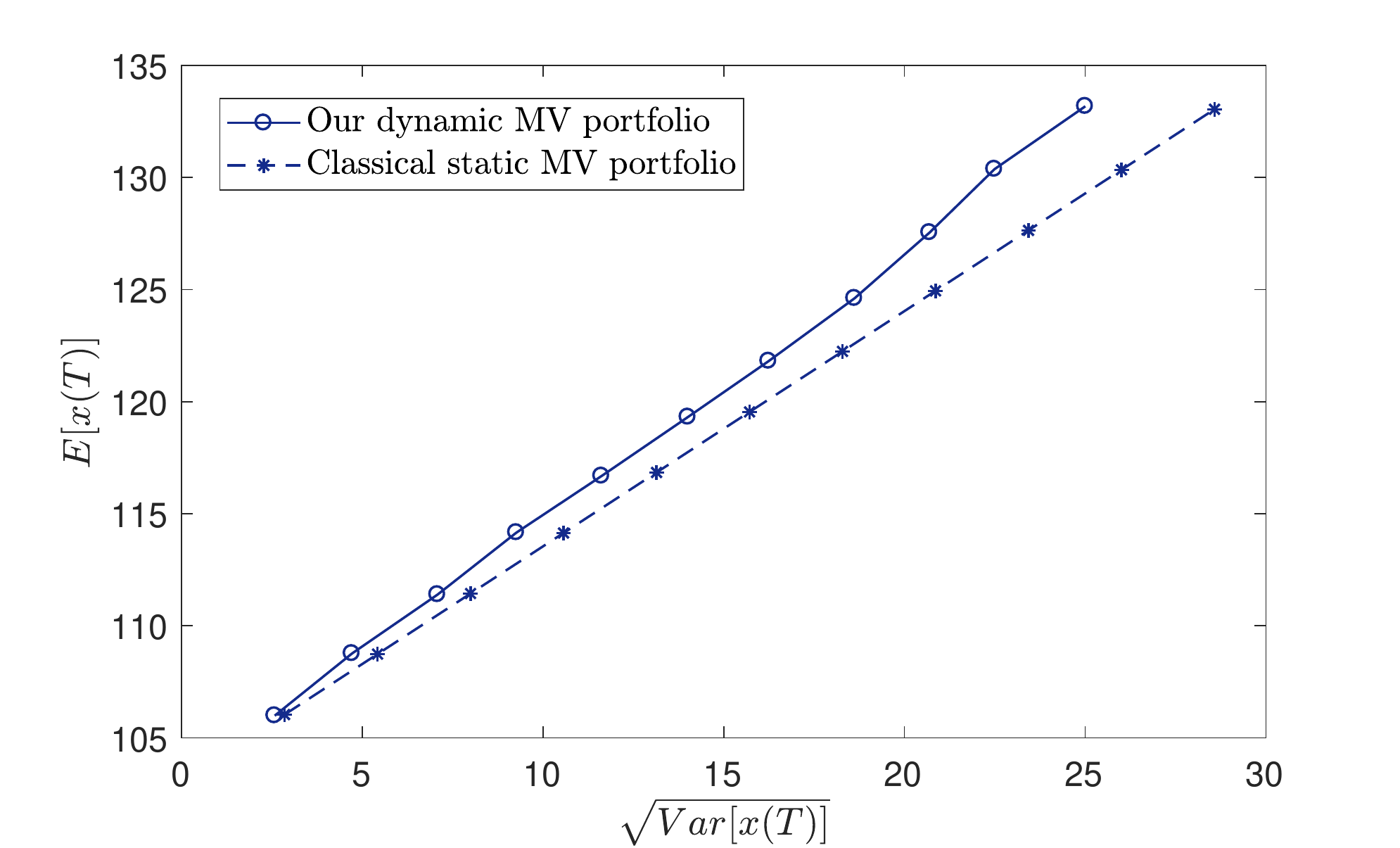}\\
    \caption{The efficient frontiers} \label{fig:mv_eff}
\end{figure}
\end{example}

\section{Conclusion} \label{se_conclusion}
In this paper, we have developed explicit solutions for the continuous-time constrained stochastic LQ control problem. Our model includes a general class of linear constraints on both control and state variables, which is able to model various control constraints, such as the positive and negative constraints, the cone constraints and the state-dependent upper and lower bound constraints. The structure property of this type of problem helps us to develop the state separation result for the control policy, which plays a key role in deriving the explicit control policy. We also extend our results to the problem with infinite control horizon. Under some conditions, the stationary optimal control policy can be achieved. These results have been applied to the dynamic constrained mean-variance portfolio selection problem. One possible future research is to extend our results to the stochastic control problem with partial moments as the objective function, which has various applications in portfolio management such as the mean-downside-risk portfolio optimization model.

\bibliographystyle{plain}        
\bibliography{cone_LQ_automatica}

\end{document}